\numberwithin{equation}{section}
\renewcommand{\subsubsection}{\@startsection
{subsubsection} {3} {0mm} {\baselineskip} {-0.5\baselineskip} {\normalfont\normalsize\bfseries}} \makeatother
\newtheorem{theorem}{Theorem}[section]
\newtheorem{lemma}[theorem]{Lemma}
\newtheorem{proposition}[theorem]{Proposition}
\newtheorem{corollary}[theorem]{Corollary}
\newtheorem{remark}[theorem]{Remark}
\newtheorem*{acknow}{Acknowledgments}
\def\A{A}
\def\osp{{\mathfrak{osp}}}
\def\bi{\mathcal{BI}}
\begin{document}

\title[$\osp(1,2)$ and generalized Bannai-Ito algebras]
{$\osp(1,2)$ and generalized Bannai-Ito algebras}

\author{Vincent X. Genest }
\address{Department of Mathematics, Massachussetts Institute of Technology, Cambridge, MA 02139, USA} \email{vxgenest@mit.edu}

\author{Luc Lapointe}
\address{Instituto de Matem\'atica y F\'{\i}sica, Universidad de Talca, Casilla 747, Talca,
Chile} \email{lapointe@inst-mat.utalca.cl}

\author{Luc Vinet}
\address{Centre de Recherches Math\'ematiques, Universit\'e de Montr\'eal, Montr\'eal, QC H3C 3J7, Canada} \email{luc.vinet@umontreal.ca}


\begin{abstract}   Generalizations of the (rank 1)
 Bannai-Ito algebra are obtained from a refinement of the grade involution of the Lie superalgebra $\osp(1,2)$.
  A hyperoctahedral extension is derived  by using a realization of $\osp(1,2)$  in terms of Dunkl operators associated to the Weyl group $B_3$.
\end{abstract}

\keywords{Bannai-Ito algebra}

\maketitle

\section{Introduction}

The Bannai-Ito algebra $\bi_3$ is the associative algebra over $\mathbb C$ with three generators $K_{12}, K_{23}, K_{13}$ that satisfy the relations
\begin{equation} \label{eq1.1}
\{K_{12}, K_{23}\}=K_{13}+\omega_{13}, \qquad \{K_{12}, K_{13}\}=K_{23}+\omega_{23},\qquad \{K_{13}, K_{23}\}=K_{12}+\omega_{12}  
\end{equation}
where $\{A,B \}=AB+BA$ is the anticommutator and where $\omega_{12}$, $\omega_{23}$ and $\omega_{13}$ are structure constants.  It is readily verified that the Casimir element
\begin{equation}
Q=K_{12}^2+ K_{13}^2+ K_{23}^2
\end{equation}
belongs to the center of $\bi_3$.

The Bannai-Ito algebra was first presented in \cite{cit1} as the algebra encoding the bispectral properties of the Bannai-Ito polynomials \cite{cit2}. Its connection to the Lie superalgebra $\osp(1,2)$ was understood soon after when (a central extension of) $\bi_3$ was shown to be the centralizer of the coproduct embedding of $\osp(1,2)$ in the three-fold product of this superalgebra \cite{cit3,cit4,cit5}.  As a consequence,   the Bannai-Ito polynomials were seen \cite{cit3} to be essentially the Racah coefficients of $\osp(1,2)$.  In the following, we shall keep the notation $\bi_3$ when the symbols $\omega_{ij}$, $ij= 12,13,23$, are central elements instead of constants.

The Bannai-Ito algebra was further shown \cite{cit6} to be isomorphic to the degenerate double affine Hecke algebra of type $(C_1^{\vee},C_1)$.

This Bannai-Ito algebra proves relevant in a variety of contexts where realizations of $\osp(1,2)$ arise. It intervenes in Dunkl harmonic analysis on the 2-sphere \cite{cit7} and is the symmetry algebra in three dimensions of a superintegrable model with reflections \cite{cit8} and of the Dirac-Dunkl equation \cite{cit9}.  These models are actually quite useful in the identification of the higher rank extensions of $\bi_3$ that have been obtained \cite{cit10}.

In view of the fundamental features and various applications of the Bannai-Ito algebra, it should be of interest to explore possible generalizations.  When the Bannai-Ito algebra is identified as the algebra of the intermediate Casimir operators in the triple product of $\osp(1,2)$ algebras, the reflection group $\mathbb Z_2^3$ formed by the grade involutions of each factor in $\osp(1,2)^{\otimes 3}$ is present.  The purpose of this paper is to show that the Bannai-Ito algebra admits generalizations that lie outside the setting of coproduct homomorphisms when the $\osp(1,2)$ grade involution decomposes in a product of supplementary involutions that leave the even part of the algebra invariant.  The extended Bannai-Ito algebra will be generated by the elements centralizing $\osp(1,2)$ that can be found in this situation. The results bear a connection with recent work on the symmetries of the Dirac-Dunkl equation \cite{cit11}.

The paper will proceed as follows.  The definition and basic features of the Lie superalgebra $\osp(1,2)$ will be recalled in Section~\ref{sec2} where the supplementary involutions and their properties will be introduced.  Only three such involutions will be needed to extend
the rank~1 Bannai-Ito algebra.
The centralizing elements will be constructed in Section~\ref{sec3} and the relations defining the algebra they generate will be given.  They will be seen to extend \eqref{eq1.1} non-trivially in general.

Section~\ref{sec4} will explain how the standard Bannai-Ito algebra $\bi_3$ is recovered when the framework is specialized to the three-fold product of $\osp(1,2)$.  Section~\ref{sec5} will be dedicated to an interesting hyperoctahedral extension of the Bannai-Ito algebra that results when $\osp(1,2)$ is realized in terms of Dunkl operators associated to the $B_3$-Weyl group. Explicit expressions will be provided and an extension of \eqref{eq1.1} involving elements of the signed permutation group on three objects will be obtained. Section~\ref{sec6} will offer a discussion of the generalized Bannai-Ito algebra that is found when the $\osp(1,2)$ realization involves a Clifford algebra. The paper will end with concluding remarks. Some details of the derivation of the structure relations of the generalized algebra will be given in the Appendix.

\section{$\osp(1,2)$ and involutions} \label{sec2}

The Lie superalgebra $\osp(1,2)$ can be presented by adjoining to the three generators $\A_0$, $A_+$ $A_-$ the grade involution $P$ that accounts for the $\mathbb Z_2$-grading of the algebra.  That $\A_0$ and $A_{\pm}$ are respectively even and odd generators is enforced by taking
\begin{equation}
P^2= 1,\qquad [P,\A_0]=0,\qquad \{P,A_{\pm} \}=0
\end{equation}
with $[A,B]=AB-BA$.  This is then supplemented by the relations
\begin{equation} \label{eq2.2}
[\A_0 , A_{\pm}]=\pm A_{\pm},\qquad \{A_+,A_- \}=2 \A_0
  \end{equation}
to complete the definition of $\osp(1,2)$.  The Casimir element
\begin{equation}
S = \frac{1}{2} \bigl( [A_-,A_+]-1  \bigr)
\end{equation}
enjoys the same relations that $P$ does with $\A_0$ and $A_{\pm}$ and thus
$\Gamma=SP$ commutes with all generators:
\begin{equation}
[\Gamma, A_{\pm}]=[\Gamma, \A_0]=[\Gamma,P]=0 \, .
\end{equation}

\noindent Recalling that $[A,BC]= \{A,B \}C-B\{A,C\}$, it is useful to record that
\begin{equation} \label{eq2.5}
[A_{\mp}, A_{\pm}^2]= 2[\A_0,A_{\pm}]= \pm 2 A_{\pm} \, .
\end{equation}
The three even generators 
\begin{equation}
B_{\pm}= A_{\pm}^2 \quad \text{and} \quad \A_0
\end{equation}
are readily seen to close onto the $su(1,1)$ commutation relations and to commute with $P$:
\begin{equation} \label{eq2.7}
[B_+,B_-]=-4\A_0,\qquad [\A_0,B_{\pm}]= \pm 2B_{\pm},\qquad [P,B_{\pm}]=[P,\A_0]=0 \, .
\end{equation}
The $su(1,1)$ Casimir element
\begin{equation}
C= \frac{1}{4} \bigl ( \A_0^2 -B_+ B_- -2 \A_0 \bigr)
\end{equation}
is related as follows with the Casimir element $S=\Gamma P$:
\begin{equation}
\Gamma^2- \Gamma P = 4 C+\frac{3}{2}\, .
\end{equation}
Let us now introduce supplementary involutions $P_i$, $i =1,\dots,n$, such that
\begin{equation}
P= P_1 P_2 \cdots P_n
\end{equation}
and
\begin{equation}
P_i^2=1 ,\qquad [P_i,P_j]=0,\quad i\neq j \, .
\end{equation}  
We shall make two additional assumptions on the relations that these involutions have with the $\osp(1,2)$ generators:
\begin{enumerate}
\item[i.] We will suppose that they all commute with the $su(1,1)$ generators:
  \begin{equation}
[P_i,\A_0]= [P_i, B_{\pm}]=0,\quad \forall i
  \end{equation}
 \item[ii.] We shall impose an additivity or decomposition property of the form 
   \begin{equation} \label{eq2.13}
      \bigl[P_i, [ P_j,A_{\pm}] \bigr]=0, \quad i\neq j
   \end{equation}
   stating that the commutator of the odd generators with any involution is even with respect to all the others.
\end{enumerate}
This last property entails the following lemma that will prove quite useful.
\begin{lemma}
  Subject to the above definitions and hypotheses, together with the $\osp(1,2)$ generators $A_+$ and $A_-$, the involutions $P_i$, $i=1,\dots,n$, satisfy the following relations:
  \begin{equation} \label{eq2.14}
P_i A_{\pm} P_j + P_j A_{\pm} P_i = P_i P_j A_{\pm} +  A_{\pm} P_i P_j,\quad i\neq j 
  \end{equation}
  and
 \begin{equation} \label{eq2.15}
   P_i A_{\pm }A_{\mp} A_{\pm} P_j +  P_j A_{\pm }A_{\mp} A_{\pm} P_i =
    P_i P_j A_{\pm }A_{\mp} A_{\pm}  +  A_{\pm }A_{\mp} A_{\pm} P_i P_j, \quad i\neq j.
     \end{equation}
  \end{lemma}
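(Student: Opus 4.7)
I would observe first that both identities \eqref{eq2.14} and \eqref{eq2.15} have exactly the same shape: each asserts that, for the appropriate operator $X$,
\[
P_i X P_j + P_j X P_i \;=\; P_i P_j X + X P_i P_j.
\]
Expanding the double commutator $[P_i,[P_j,X]]$ and using $[P_i,P_j]=0$ shows that the identity above is equivalent to $[P_i,[P_j,X]]=0$. The whole lemma therefore reduces to checking the single statement $[P_i,[P_j,X]]=0$ for $X=A_\pm$ (giving \eqref{eq2.14}) and for $X=A_\pm A_\mp A_\pm$ (giving \eqref{eq2.15}).

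For $X=A_\pm$ this is hypothesis \eqref{eq2.13} verbatim, so \eqref{eq2.14} is immediate. For $X=A_\pm A_\mp A_\pm$, the idea is to exploit the $\osp(1,2)$ anticommutator $\{A_+,A_-\}=2\A_0$ together with $B_\pm=A_\pm^2$ to rewrite the triple product so that each summand contains only a single odd generator, the remaining factors lying in the common centralizer of the $P_i$. Concretely, I would use
\[
A_\pm A_\mp A_\pm \;=\; 2 A_\pm \A_0 - B_\pm A_\mp,
\]
which follows at once from $A_\mp A_\pm = 2\A_0 - A_\pm A_\mp$.

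By hypothesis (i) the $P_i$ commute with $\A_0$ and $B_\pm$, so expanding $[P_j,\,\cdot\,]$ by the Leibniz rule gives
\[
[P_j,\,A_\pm A_\mp A_\pm] \;=\; 2[P_j,A_\pm]\,\A_0 \;-\; B_\pm\,[P_j,A_\mp].
\]
Applying $[P_i,\,\cdot\,]$ once more and again sliding $P_i$ past $\A_0$ and $B_\pm$ reduces the result to a linear combination of $[P_i,[P_j,A_\pm]]$ and $[P_i,[P_j,A_\mp]]$, both of which vanish by \eqref{eq2.13}. Hence $[P_i,[P_j,A_\pm A_\mp A_\pm]]=0$, which rearranges to \eqref{eq2.15}.

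The main conceptual step is the recognition that both displayed symmetry relations are simply the expanded form of a single double-commutator vanishing; after that, the only mild obstacle is choosing a rewriting of $A_\pm A_\mp A_\pm$ in which the odd generator appears only once per term, so that hypothesis \eqref{eq2.13} can be invoked directly on each odd factor. No further case analysis or deeper use of the $\osp(1,2)$ relations is needed.
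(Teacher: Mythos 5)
Your proposal is correct and follows essentially the same route as the paper: both arguments reduce \eqref{eq2.15} to the hypothesis \eqref{eq2.13} by using $\{A_+,A_-\}=2\A_0$ to rewrite $A_{\pm}A_{\mp}A_{\pm}$ as a sum of terms each containing a single odd generator multiplied by even factors ($\A_0$ and $B_{\pm}=A_{\pm}^2$) that commute with all the $P_i$. Your only refinement is to package both identities uniformly as the vanishing of the double commutator $[P_i,[P_j,X]]$ and then apply the Leibniz rule, which is a clean but equivalent repackaging of the paper's direct manipulation of $P_i X P_j + P_j X P_i$ via \eqref{eq2.14}.
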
  
\begin{proof} Formula \eqref{eq2.14} straightforwardly results from expanding
  \eqref{eq2.13}.  To prove \eqref{eq2.15}, consider first the upper signs.  From the commutation relation \eqref{eq2.2}, one has
  \begin{equation} \label{eq2.16}
A_+ A_- A_+= -A_- A_+^2 + 2 \A_0 A_+ \, .
  \end{equation}  
  It hence follows that
  \begin{equation}
    \begin{split}
      P_i A_{+}A_{-} A_{+} P_j +  (i \leftrightarrow j)& = P_i (-A_- A_+^2 + 2 \A_0 A_+)P_j +   (i \leftrightarrow j) \\
      & =  -P_i A_- P_j A_+^2 + 2 \A_0 P_i  A_+ P_j +   (i \leftrightarrow j) \\
      & = -(P_i P_j A_- + A_- P_i P_j)A_+^2 + 2 \A_0 (P_i P_j A_+ + A_+ P_i P_j) \\
      & = P_i P_j (-A_- A_+^2 + 2 \A_0 A_+) + (-A_- A_+^2 + 2 \A_0 A_+) P_i P_j \\
      & =  P_i P_j A_{+}A_{-} A_{+}  +  A_{+ }A_{-} A_{+} P_i P_j,
    \end{split}  
  \end{equation}  
  where we have used $[P_i, A_+^2]= [P_i,\A_0]=0$, \eqref{eq2.14} and \eqref{eq2.16}.

  The proof of \eqref{eq2.15} with lower signs proceeds in the same way.
  \end{proof}  

\section{Centralizing elements and a generalized Bannai-Ito algebra}
\label{sec3}

We shall now introduce elements involving the supplementary involutions that will form a centralizer of $\osp(1,2)$.  This will be the extended algebra we are looking for and its defining relations will be given below.

Denote by $[n]= \{1,\dots, n \}$ the set of the first $n$ positive integers. Let $S=\{ s_1,\dots, s_k \}$  be an ordered $k$-subset of $[n]$. Write
\begin{equation}
P_S= P_{s_1} P_{s_2} \cdots P_{s_k}\, .
\end{equation}  
\begin{proposition} \label{prop3.1} The elements 
  \begin{align} \label{eq3.2a}
    \tag{3.2a}
       C_S& = \frac{1}{4} \bigl\{ A_-, [A_+, P_S] \bigr\} - \frac{1}{2}P_S \\
\tag{3.2b} \label{eq3.2b}
       & =  \frac{1}{4} \bigl\{ [P_S, A_-], A_+ \bigr\} - \frac{1}{2}P_S
  \end{align}
  \setcounter{equation}{2}
  associated to the set $S$,  centralize $\osp(1,2)$, that is, they satisfy
  \begin{equation}
[C_S, \A_0] = [C_S, A_{\pm}]= [C_S,P]=0\, .
  \end{equation}
  \end{proposition}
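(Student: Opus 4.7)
The plan is to verify the four commutator identities $[C_S,A_0]=[C_S,A_\pm]=[C_S,P]=0$ by direct algebraic manipulation. The key inputs are entirely elementary: since each $P_i$ commutes with $A_0$ and with $B_\pm=A_\pm^2$, so does $P_S$, and hence $[B_\pm,P_S]=0$ unpacks to $\{A_\pm,[A_\pm,P_S]\}=0$ while $[A_0,P_S]=0$ gives $[A_0,[A_\pm,P_S]]=\pm[A_\pm,P_S]$. These two observations will carry essentially the entire argument; Lemma~2.1 and the decomposition property \eqref{eq2.13} are not needed here, and will only surface later when one derives the structure relations among the $C_S$. As a preliminary, I would also check the equivalence of \eqref{eq3.2a} and \eqref{eq3.2b}, which after expanding both anticommutators reduces to $\{A_+,A_-\}=2A_0$ together with $[A_0,P_S]=0$.

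Writing $Y=[A_+,P_S]$, the central computation is $[\{A_-,Y\},A_+]$. Expanding the commutator and using $\{A_+,Y\}=0$ to replace $A_+Y$ by $-YA_+$ in the two offending terms, everything collapses to $-(A_-A_+ + A_+A_-)Y + Y(A_-A_+ + A_+A_-) = -2[A_0,Y] = -2Y$. Combining this with the contribution $-2[P_S,A_+]=+2Y$ coming from the $-\tfrac{1}{2}P_S$ piece of $C_S$, one obtains $4[C_S,A_+]=0$. The identity $[C_S,A_-]=0$ is handled symmetrically starting from form \eqref{eq3.2b}: with $Z=[P_S,A_-]$, the mirror relations $\{A_-,Z\}=0$ and $[A_0,Z]=Z$ produce the same cancellation.

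The remaining two identities are lighter. For $[C_S,A_0]=0$, the Leibniz rule together with $[A_0,A_-]=-A_-$, $[A_0,Y]=Y$ and $[A_0,P_S]=0$ makes every term in $[A_0,\{A_-,Y\}]$ cancel in pairs. For $[C_S,P]=0$, the relations $\{P,A_\pm\}=0$ combined with $[P,P_S]=0$ force $\{P,Y\}=0$; then $P$ commutes with $A_-Y+YA_-$ because each product picks up two sign flips when $P$ is pushed through, and $[P,P_S]=0$ disposes of the $P_S$ summand. The only mild obstacle is the sign bookkeeping in the main computation, but the anticommutation $\{A_+,Y\}=0$ makes the manipulation mechanical; no difficulty arises from the length or the internal structure of $S$, since only the aggregate relations $[P_S,A_0]=[P_S,B_\pm]=0$ are ever used.
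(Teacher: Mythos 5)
Your proposal is correct and follows essentially the same route as the paper: a direct expansion that uses only $[P_S,A_0]=[P_S,B_\pm]=0$, with the equivalence of (3.2a) and (3.2b) and the commutators with $A_0$ and $P$ handled exactly as in the text, and with the same (correct) observation that the decomposition property \eqref{eq2.13} is not needed here. The only cosmetic difference is in the hardest step: you dispose of $[C_S,A_\pm]$ by anticommuting $A_\pm$ through $[A_\pm,P_S]$ via $\{A_\pm,[A_\pm,P_S]\}=[A_\pm^2,P_S]=0$ and reducing to a multiple of $[A_0,[A_\pm,P_S]]$, whereas the paper reduces to $\bigl[[A_+,P_S],A_-^2\bigr]$ and finishes with the Jacobi identity together with $[A_\mp,A_\pm^2]=\pm 2A_\pm$; the inputs are identical. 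One sign to fix in your sketch: with $Z=[P_S,A_-]$ one has $[A_0,Z]=-Z$, not $+Z$, and it is with this sign that $\tfrac14[\{Z,A_+\},A_-]=-\tfrac12[A_0,Z]=+\tfrac12 Z$ cancels against the contribution $-\tfrac12[P_S,A_-]=-\tfrac12 Z$ from the $-\tfrac12 P_S$ term.
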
  
\begin{proof}
  The equality of the formulas \eqref{eq3.2a} and \eqref{eq3.2b} 
  for $C_S$ is immediately obtained by expanding both expressions and using
  $\{ A_+, A_- \}=2 \A_0$ and $[P_S,\A_0]=0$.  The relations $[C_S,\A_0]=0$
  and $[C_S,P]=0$ follow from the fact that $C_S$ is bilinear in $A_+$
  and $A_-$, that $\A_0$ commutes with $P_S$ and that $\{ P, A_{\pm}\}=0$.
  Let us now show that $[C_S, A_-]=0$. First, one observes that 
  \begin{equation}
    \begin{split}
      [C_S, A_-] & =  \left[\frac{1}{4} A_- [A_+,P_S]+ \frac{1}{4}[A_+,P_S]A_--\frac{1}{2}P_S, A_-  \right] \\
        & =  \frac{1}{4} \bigl[ [A_+, P_S], A_-^2  \bigr] -\frac{1}{2}[P_S,A_-]\, .
      \end{split}
  \end{equation}  
  Using the Jacobi identity and the fact that the involutions commute with the $su(1,1)$ generators $B_{\pm}= A_{\pm}^2$, one completes the proof by observing that
  \begin{equation}
[C_S, A_-]=  -\frac{1}{4} \bigl[ [A_-^2, A_+ ], P_S  \bigr] -\frac{1}{2}[P_S,A_-]=0
  \end{equation}  
with the help of \eqref{eq2.5}. The demonstration that $[C_S, A_+]=0$ proceeds in exactly the same way when the expression \eqref{eq3.2b} for $C_S$ is used.
\end{proof}
\begin{remark} Owing to the fact that the involutions commute, $C_S$ is symmetric under the permutations of the elements of $S$. For example, in the case of subsets $S$ of cardinality $2$, the $2$-index $C_{ij}$, $i,j=1,\dots,n$, satisfy
  \begin{equation}
   C_{ij}= C_{ji},\quad i\neq j\, .
   \end{equation} 
\end{remark}  
\begin{remark}
  For $S=[n]$, $P_{[n]}=P$ by hypothesis and since $P A_{\pm}= -A_{\pm} P$,
  \begin{equation}
C_{[n]} = \frac{1}{4} \bigl\{ A_-, [A_+,P] \bigr\} -\frac{1}{2} P = \frac{1}{2} \bigl([A_-,A_+]-1 \bigr) P = \Gamma\, .
  \end{equation}  
\end{remark}  
The following corollary is immediate given that $[C_S,A_{\pm}]=0$ and $[C_S,P]=0$ by Proposition~\ref{prop3.1}.
\begin{corollary}\label{cor3.4}
  The elements $C_S$ defined in Proposition~\ref{prop3.1} also have the property of commuting with $C_{[n]}=\Gamma$:
  \begin{equation} \label{eq3.7}
[C_S, C_{[n]}]=0 \, .
  \end{equation}  
\end{corollary}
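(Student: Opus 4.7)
The plan is to exploit the explicit formula for $\Gamma$ recorded in the Remark just before the corollary, namely
\[
C_{[n]} = \Gamma = \tfrac{1}{2}\bigl([A_-, A_+] - 1\bigr) P,
\]
which exhibits $\Gamma$ as a (non-commutative) polynomial in the three elements $A_+$, $A_-$ and $P$ of $\osp(1,2)$. Thus $\Gamma$ lies in the subalgebra generated by $A_\pm$ and $P$.

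Next, I would invoke Proposition~\ref{prop3.1} verbatim, which already asserts that
\[
[C_S, A_+] \;=\; [C_S, A_-] \;=\; [C_S, P] \;=\; 0.
\]
Since commutation with each of a set of generators extends via the Leibniz rule $[X, YZ] = [X, Y]Z + Y[X, Z]$ to commutation with any polynomial in those generators, one concludes immediately that $[C_S, \Gamma] = 0$, which is precisely \eqref{eq3.7}.

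There is no genuine obstacle here: the corollary is a bookkeeping observation that $C_{[n]}$ sits inside the subalgebra generated by $A_\pm$ and $P$, so centralizing that subalgebra (as Proposition~\ref{prop3.1} shows $C_S$ does) automatically forces centralization of $C_{[n]}$. This is why the statement is labelled a corollary rather than a proposition.
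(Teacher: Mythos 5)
Your argument is correct and is essentially the paper's own: the text preceding the corollary states that it is immediate from $[C_S,A_{\pm}]=[C_S,P]=0$ established in Proposition~\ref{prop3.1}, which is exactly your observation that $\Gamma=\tfrac{1}{2}\bigl([A_-,A_+]-1\bigr)P$ lies in the subalgebra generated by $A_{\pm}$ and $P$ and hence is centralized by $C_S$ via the Leibniz rule.
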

The next two lemmas provide relations between the 1-index $C_i$, $i=1,\dots,n$, and the involutions.
\begin{lemma} \label{lem3.4}
  The 1-index elements
  \begin{align}
    \tag{3.8a} \label{eq3.8a}
    C_i & = \frac{1}{4} \bigl\{ A_-, [A_+,P_i] \bigr\} -\frac{1}{2} P_i  \\
     \tag{3.8b} \label{eq3.8b}
    &= \frac{1}{4} \bigl\{ [P_i, A_-], A_+ \bigr\} -\frac{1}{2} P_i
  \end{align}
\setcounter{equation}{8}  
satisfy
\begin{equation} \label{eq3.9}
C_i  P_j+ C_j P_i = P_i  C_j+ P_j C_i 
\end{equation}
for all $i \neq j$.
\end{lemma}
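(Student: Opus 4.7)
The plan is to rewrite \eqref{eq3.9} in commutator form. Because $[P_i,P_j]=0$, the $-\tfrac{1}{2}P_i$ summand of $C_i$ contributes equally to each side and drops out, so \eqref{eq3.9} is equivalent to
\begin{equation*}
[C_i,P_j]+[C_j,P_i]=0.
\end{equation*}

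For $[C_i,P_j]$ I would apply the derivation-type identity $[\{A,X\},B]=\{[A,B],X\}+\{A,[X,B]\}$ with $A=A_-$, $X=[A_+,P_i]$, $B=P_j$. The hypothesis \eqref{eq2.13} gives $[[A_+,P_i],P_j]=0$ and kills the second summand, so
\begin{equation*}
[C_i,P_j]=\tfrac{1}{4}\bigl\{[A_-,P_j],[A_+,P_i]\bigr\},
\end{equation*}
and the analogous formula for $[C_j,P_i]$ follows by swapping $i\leftrightarrow j$. The task becomes proving
\begin{equation*}
\bigl\{[A_-,P_j],[A_+,P_i]\bigr\}+\bigl\{[A_-,P_i],[A_+,P_j]\bigr\}=0.
\end{equation*}

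I would then expand into sixteen monomials and sort them into three classes. The four terms of type $-A_\pm P_kP_lA_\mp$ collapse, via $P_iP_j=P_jP_i$, to $-2A_-P_iP_jA_+-2A_+P_iP_jA_-$. The four terms of type $-P_kA_\pm A_\mp P_l$ combine, via $\{A_+,A_-\}=2\A_0$ and $[\A_0,P_i]=0$, into $-4\A_0 P_iP_j$. The remaining eight arrange as four sums of the form $P_iA_\pm P_j+P_jA_\pm P_i$, each flanked on one side by $A_\mp$; to each such sum I would apply \eqref{eq2.14} to replace it by $P_iP_jA_\pm+A_\pm P_iP_j$. Collecting, the newly produced $A_-P_iP_jA_+$ and $A_+P_iP_jA_-$ pieces cancel the first group, while the rest reassemble as $(A_-A_++A_+A_-)P_iP_j+P_iP_j(A_-A_++A_+A_-)=4\A_0 P_iP_j$, which cancels the second group. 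The main obstacle is simply the bookkeeping of these sixteen monomials; conceptually the only algebraic inputs are \eqref{eq2.13}, \eqref{eq2.14}, $\{A_+,A_-\}=2\A_0$, and $[\A_0,P_i]=0$.
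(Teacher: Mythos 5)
Your proposal is correct, and the first half coincides with the paper's: both reduce \eqref{eq3.9} to $[C_i,P_j]+[C_j,P_i]=0$ and both obtain $[C_i,P_j]=\tfrac14\bigl\{[A_-,P_j],[A_+,P_i]\bigr\}$ from \eqref{eq3.8a} and \eqref{eq2.13}. Where you diverge is in handling $[C_j,P_i]$: you compute it from the same expression \eqref{eq3.8a} with $i\leftrightarrow j$, which leaves you having to prove $\bigl\{[A_-,P_j],[A_+,P_i]\bigr\}+\bigl\{[A_-,P_i],[A_+,P_j]\bigr\}=0$ by expanding sixteen monomials and invoking \eqref{eq2.14}, $\{A_+,A_-\}=2\A_0$ and $[\A_0,P_i]=0$; I checked the bookkeeping and the three groups cancel as you describe. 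The paper instead computes $[C_j,P_i]$ from the \emph{other} form \eqref{eq3.8b} of $C_j$, obtaining $[C_j,P_i]=\tfrac14\bigl\{[P_j,A_-],[A_+,P_i]\bigr\}$, which is manifestly $-[C_i,P_j]$ since $[P_j,A_-]=-[A_-,P_j]$; the antisymmetry \eqref{eq3.12} then follows in one line with no expansion and no use of \eqref{eq2.14} at all (the equivalence of the two forms having already been established in Proposition~\ref{prop3.1}). Your route is self-contained and makes the role of \eqref{eq2.14} explicit, but it is considerably longer; the paper's buys brevity by exploiting the dual expressions for $C_S$, and moreover yields the stronger intermediate identity \eqref{eq3.12}, which is reused later (e.g.\ in the proof of Lemma~\ref{lem3.5}), so it is worth recording explicitly rather than only the symmetrized sum.
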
  
\begin{proof} Take $i\neq j$.  Beginning with \eqref{eq3.8a} and using \eqref{eq2.13}, one finds
  \begin{equation} \label{eq3.10}
[C_i,P_j]= \frac{1}{4} \Bigl[A_- [A_+, P_i] + [A_+, P_i] A_- , P_j \Bigr] =\frac{1}{4} \Bigl( [A_-,P_j] [A_+,P_i]+ [A_+,P_i] [A_-,P_j]  \Bigr) \, .
  \end{equation}  
  Similarly, starting with \eqref{eq3.8b}, one obtains
\begin{equation} \label{eq3.11}
[C_j,P_i]= \frac{1}{4} \Bigl[[P_j, A_-]A_+ + A_+ [P_j, A_-]  , P_i \Bigr] =\frac{1}{4} \Bigl( [P_j, A_-] [A_+,P_i]+ [A_+,P_i] [P_j, A_-]  \Bigr) \, .
  \end{equation}    
One then sees that
\begin{equation} \label{eq3.12}
[C_i,P_j]=-[C_j,P_i],\quad i\neq j
\end{equation}  
from where the relation \eqref{eq3.9} follows.
\end{proof}  
In view of the expression \eqref{eq3.10} (or \eqref{eq3.11}) and given \eqref{eq2.13}, it follows that
\begin{equation}
\bigl[ P_i, [P_j,C_k] \bigr] =0, \quad \text{for }i,j,k \text{ distinct.}
\end{equation}  
Lemma~\ref{lem3.4} admits a multi-index generalization.
\begin{lemma} \label{lem3.5}
  Let $S=\{ s_1,\dots, s_k \}$ be an ordered $k$-subset of $[n]$ and write
  \begin{equation}
P_S=\prod_{\ell=1}^k P_{s_\ell} \, .
  \end{equation}  
  One has
  \begin{equation} \label{eq3.15}
P_S \left(\sum_{i=1}^k P_{s_i} C_{s_i}  \right) = \left(\sum_{i=1}^k C_{s_i} P_{s_i}  \right) P_S\, .
  \end{equation}  
\end{lemma}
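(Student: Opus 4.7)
The plan is to rewrite both sides using the fact that the $P_i$'s are commuting involutions, reduce the identity to a statement about commutators, and then cancel terms in pairs using the antisymmetry relation $[C_i,P_j]=-[C_j,P_i]$ from Lemma~\ref{lem3.4}.

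First I would exploit $P_{s_i}^2=1$ and the mutual commutativity of the $P_j$'s to observe that $P_S P_{s_i}=P_{s_i}P_S P_{s_i}\cdot P_{s_i}=P_{S\setminus\{s_i\}}$, where $P_{S\setminus\{s_i\}}$ denotes the product of the $P_{s_\ell}$ with $\ell\neq i$. This turns the left-hand side into $\sum_{i=1}^k P_{S\setminus\{s_i\}}\,C_{s_i}$ and, analogously, the right-hand side into $\sum_{i=1}^k C_{s_i}\,P_{S\setminus\{s_i\}}$. Thus the lemma is equivalent to
\begin{equation*}
\sum_{i=1}^k \bigl[P_{S\setminus\{s_i\}},\,C_{s_i}\bigr]=0.
\end{equation*}

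Next I would expand each commutator $[P_{S\setminus\{s_i\}},C_{s_i}]$ by the Leibniz rule. The product $P_{S\setminus\{s_i\}}$ is a product of $P_{s_j}$'s for $j\neq i$, and the relation $[P_\ell,[P_j,C_m]]=0$ for distinct $\ell,j,m$ (noted just before Lemma~\ref{lem3.5}) guarantees that each individual bracket $[P_{s_j},C_{s_i}]$ commutes with every remaining factor $P_{s_\ell}$ with $\ell\neq i,j$. Consequently the expansion collapses to
\begin{equation*}
\bigl[P_{S\setminus\{s_i\}},\,C_{s_i}\bigr]=\sum_{j\neq i} P_{S\setminus\{s_i,s_j\}}\,[P_{s_j},C_{s_i}].
\end{equation*}

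Summing over $i$, the double sum runs over ordered pairs $(i,j)$ with $i\neq j$, and I would pair up the $(i,j)$ and $(j,i)$ contributions. For each such pair the factor $P_{S\setminus\{s_i,s_j\}}$ is the same, and the sum of the brackets is
\begin{equation*}
[P_{s_j},C_{s_i}]+[P_{s_i},C_{s_j}]=0
\end{equation*}
by relation \eqref{eq3.12}. This makes the total sum vanish and proves the lemma. The only delicate point is justifying the clean Leibniz expansion in the third step, but this is precisely what the nested-commutator vanishing $[P_\ell,[P_j,C_m]]=0$ provides, so no genuine obstacle remains.
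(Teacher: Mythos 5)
Your proof is correct, but it takes a genuinely different route from the paper's. The paper also begins by reducing the claim to $\sum_{i=1}^k [P_{S\setminus\{s_i\}},C_{s_i}]=0$, but then proceeds by induction on $k$: the base case $k=2$ is Lemma~\ref{lem3.4}, and the induction step uses a symmetrization trick (writing the sum as an average over which index is singled out, splitting $P_{S\setminus\{s_i\}}=P_{S\setminus\{s_i,s_j\}}P_{s_j}$, cancelling the $P_{S\setminus\{s_\ell,s_j\}}[P_{s_j},C_{s_\ell}]$ terms pairwise via \eqref{eq3.12}, and recognizing what remains as the induction hypothesis). You instead avoid induction entirely: you expand each $[P_{S\setminus\{s_i\}},C_{s_i}]$ fully by the Leibniz rule, use the nested-commutator vanishing $\bigl[P_i,[P_j,C_k]\bigr]=0$ (which the paper records just before the lemma, as a consequence of \eqref{eq3.10} and \eqref{eq2.13}) to push every bracket $[P_{s_j},C_{s_i}]$ past the remaining involutions, and then cancel the resulting double sum in $(i,j)$--$(j,i)$ pairs via the same antisymmetry \eqref{eq3.12}. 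Your argument is shorter and makes transparent that the identity is nothing but the pairwise antisymmetry summed over all pairs, once the common prefactor $P_{S\setminus\{s_i,s_j\}}$ can be factored out; the price is that it leans on the nested-commutator property, which the paper's induction manages to avoid using. Both proofs are valid, and the collapse of the Leibniz expansion --- the one delicate step you flag --- is indeed exactly what $\bigl[P_i,[P_j,C_k]\bigr]=0$ for distinct indices provides.
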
  
\begin{proof} The above relation can be written in the form
  \begin{equation} \label{eq3.16}
  \sum_{i=1}^k [ P_{S\setminus \{ s_i\}}, C_{s_i}] =0\, .
  \end{equation}  
  Use induction.  We know that \eqref{eq3.15} is true for $k=2$ from Lemma~\ref{lem3.4}.  Assume that \eqref{eq3.15}, or equivalently \eqref{eq3.16}, is valid if the cardinality of $S$ is $k-1$.  This means that for some $j$ between 1 and $k$:
   \begin{equation} \label{eq3.17}
  \sum_{i=1; i \neq j}^k [ P_{S\setminus \{ s_i, s_j\}}, C_{s_i}] =0\, .
   \end{equation}
Note that
\begin{equation} \label{eq3.18}
P_{S\setminus \{ s_i\}} = P_{S\setminus \{ s_i, s_j\}} P_{s_j}, \quad j\neq i \, .
\end{equation}  
Use \eqref{eq3.18} to write
\begin{equation} \label{eq3.19}
  \sum_{i=1}^k [ P_{S\setminus \{ s_i\}}, C_{s_i}]
  = \frac{1}{k}  \sum_{\ell=1}^k  \left(  [ P_{S\setminus \{ s_\ell\}}, C_{s_\ell}] + 
   \sum_{j=1;\, j\neq \ell}^k [ P_{S\setminus \{ s_\ell,s_j\}} P_{s_j}, C_{s_\ell}]  \right) \, .
\end{equation}  
Some commutator algebra thus gives
\begin{equation} \label{eq3.20}
\left( \frac{k-1}{k} \right)  \sum_{i=1}^k [ P_{S\setminus \{ s_i\}}, C_{s_i}]
  = \frac{1}{k}   \sum_{\substack{j,\ell=1\\ j\neq \ell}}^k \Bigl(  [ P_{S\setminus \{ s_\ell,s_j\}} , C_{s_\ell}] P_{s_j}+ 
     P_{S\setminus \{ s_\ell,s_j\} \\}  [P_{s_j}, C_{s_\ell}]  \Bigr) \, .
\end{equation}  
Since $[P_{s_j},C_{s_\ell}]=-[P_{s_\ell}, C_{s_j}]$ per \eqref{eq3.12} the terms $P_{S\setminus \{ s_\ell,s_j\}}  [P_{s_j}, C_{s_\ell}]$
cancel two by two in the sums and we are left with
\begin{equation}
\sum_{i=1}^k [ P_{S\setminus \{ s_i\}}, C_{s_i}]
  = \frac{1}{k-1}  \sum_{j=1}^k  \left(  
   \sum_{\ell=1;\,  \ell\neq j}^k [ P_{S\setminus \{ s_\ell,s_j\}} , C_{s_\ell}]  \right) P_{s_j}
\end{equation}
which is zero according to the induction hypothesis \eqref{eq3.17}.
\end{proof}  

Proposition~\ref{prop3.1} shows that when the supplementary involutions $P_i$, $i=1,\dots,n$, are present, a centralizer of $\osp(1,2)$ is generated by the elements $C_S$ corresponding to all subsets of $[n]$ with $k$ ordered elements for $k=1,\dots,n-1$.
Since we wish to extend the rank-1 Bannai-Ito algebra, we shall take $n=3$.  A centralizer of $\osp(1,2)$ will then be generated by the 1 and 2-index elements $C_1$, $C_2$, $C_3$ and  $C_{12}$, $C_{23}$, $C_{13}$ to which the Casimir element $C_{123}$ can be added.  Remember that $C_{123}$
commutes with $C_i$, $C_{ij}$, $i,j=1,2,3$, but
does not commute with the involutions $P_i$, $i=1,2,3$.  It shall be seen that the elements $C_i$, $i=1,2,3$, together with $C_{ijk}$ take the place of the structure constants and complement the elements $C_{ij}$, $ij=12,13,23$, that generalize the generators $K_{ij}$ of the Bannai-Ito algebra.  The task is now to obtain the relations that these elements obey.

It is useful to record the special form that some relations take when the indices are restricted to the set $\{1,2,3 \}$.  Take the indices $i,j,k\in \{ 1,2,3 \}$ to be all distinct and thus forming a permutation of $(1,2,3)$.
Clearly,
\begin{equation}
P=P_i P_j P_k, \quad i,j,k \in \{1,2,3\} \text{ all distinct.}
\end{equation}  
Given that $P A_{\pm}= - A_{\pm} P$, formulas of the type
\begin{equation} \label{eq3.23}
P_i P_j A_{\pm} P_k= -P_k A_{\pm} P_i P_j
\end{equation}  
follow immediately.  The result of Lemma~\ref{lem3.5} will specialize to
\begin{equation} \label{eq3.24}
[P_i P_j, C_k] + [P_j P_k, C_i] + [P_i P_k, C_j] =0 \, .
\end{equation}  
Moreover,
\begin{equation}
C_{ijk}=C_{123}=\Gamma \, .
\end{equation}
It is in fact possible to show that this Casimir operator can be written as a combination of the 1- and 2-index elements $C_i$ and $C_{ij}$ together with the involutions $P_i$.
\begin{proposition} \label{prop3.6} When $n=3$ and $i,j,k \in \{1,2,3\}$ are all
  distinct, the Casimir element $C_{ijk}$ has the following expression:
  \begin{equation} \label{eq3.26}
C_{ijk} = C_{ij} P_k + C_{jk} P_i + C_{ik} P_j -C_k P_i P_j -C_i P_j P_k-C_j P_i P_k-\frac{1}{2} P_i P_j P_k \, .
  \end{equation}
  \end{proposition}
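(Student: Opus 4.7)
The plan is to compute $C_{ijk}$ directly on the one hand, and on the other hand substitute the explicit formulas \eqref{eq3.2a} for the $1$- and $2$-index centralizers into the right-hand side of \eqref{eq3.26}, then match the two expressions. By the remark following Proposition~\ref{prop3.1}, $C_{ijk}=C_{[3]}=\Gamma = \tfrac14\{A_-,[A_+,P]\} - \tfrac12\,P$ with $P=P_iP_jP_k$, so the target identity reduces to one involving only anticommutators of the form $\{A_-,[A_+,\cdot]\}$.

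Feeding \eqref{eq3.2a} into the right-hand side of \eqref{eq3.26}, the three $-\tfrac12 P_aP_b$ contributions coming from the $2$-index $C$'s, the three $+\tfrac12 P_a$ contributions coming from the $1$-index $C$'s (with the overall minus signs in front of the $C_a$'s), and the explicit $-\tfrac12 P_iP_jP_k$ combine, using that the $P_a$'s commute, into a net $-\tfrac12 P_iP_jP_k$, which matches the polynomial part of $\Gamma$. Hence it suffices to prove the operator identity
\begin{align*}
\{A_-,&[A_+,P_iP_j]\}P_k + \{A_-,[A_+,P_jP_k]\}P_i + \{A_-,[A_+,P_iP_k]\}P_j \\
&-\{A_-,[A_+,P_i]\}P_jP_k - \{A_-,[A_+,P_j]\}P_iP_k - \{A_-,[A_+,P_k]\}P_iP_j \\
&= \{A_-,[A_+,P_iP_jP_k]\}.
\end{align*}

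To establish this, I would abbreviate $D_m := [A_+,P_m]$ and $E_m := [A_-,P_m]$, and use the Leibniz rule for commutators together with hypothesis (ii), which says that $[P_\ell, D_m]=0$ and $[P_\ell,E_m]=0$ whenever $\ell\ne m$. Leibniz gives $[A_+,P_aP_b] = D_aP_b + D_bP_a$ and $[A_+,P_iP_jP_k] = D_iP_jP_k + D_jP_iP_k + D_kP_iP_j$. Regrouping the left-hand side of the displayed identity by its $D$-factor ($D_i$, $D_j$, or $D_k$) isolates three symmetric clusters, each of the shape
\begin{equation*}
\{A_-,D_aP_b\}P_c + \{A_-,D_aP_c\}P_b - \{A_-,D_a\}P_bP_c.
\end{equation*}
Each such cluster equals $\{A_-,D_aP_bP_c\}$: expanding the anticommutators, the discrepancy collapses to $-D_aE_bP_c + D_aP_cE_b$, which vanishes because $P_c$ commutes with $E_b$ by (ii). Summing the three clusters recovers $\{A_-,[A_+,P_iP_jP_k]\}$, completing the proof.

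The calculation is essentially a bookkeeping exercise; the only real obstacle is tracking which $P_\ell$ can be moved past which bracket $[A_\pm,P_m]$, and hypothesis (ii) is precisely what supplies that freedom at each step.
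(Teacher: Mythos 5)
Your argument is correct, and it takes a genuinely different route from the paper's. The paper pairs $C_{ij}P_k$ with $-C_kP_iP_j$ and uses $P_iP_jA_\pm P_k=-P_kA_\pm P_iP_j$ (equation \eqref{eq3.23}, a consequence of $\{P,A_\pm\}=0$) to reduce each pair to $\tfrac12\bigl(A_-P_kA_+P_iP_j-A_+P_kA_-P_iP_j\bigr)$; it then evaluates the cyclic sum by a symmetrization trick, splitting each term in half, applying \eqref{eq2.14}, and solving the resulting self-referential relation of the form $X=\tfrac34A_-A_+P-\tfrac12X$. You instead reduce everything to the single identity $\sum\{A_-,[A_+,P_aP_b]\}P_c-\sum\{A_-,[A_+,P_a]\}P_bP_c=\{A_-,[A_+,P]\}$, expand by Leibniz, and group by the factor $D_a=[A_+,P_a]$; each cluster collapses to $\{A_-,D_aP_bP_c\}$ because the leftover discrepancy is $D_a[P_b,E_c]$ (equivalently $D_a[P_c,E_b]$, by the $b\leftrightarrow c$ symmetry of the cluster), which vanishes by hypothesis (ii). Your version uses only hypothesis (ii) --- equivalently \eqref{eq2.14} --- inside the main computation, never needs \eqref{eq3.23}, avoids the cyclic-sum manipulation, and exposes the result as a clean inclusion--exclusion identity; the paper's version is shorter once \eqref{eq3.23} is in hand and stays closer to the explicit expanded form of the $C_S$. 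Your accounting of the degree-zero (pure involution) part, $-\tfrac32P_iP_jP_k+\tfrac32P_iP_jP_k-\tfrac12P_iP_jP_k=-\tfrac12P$, is also correct, so the proof is complete.
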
    
  \begin{proof}
    Using the definition \eqref{eq3.2a} and \eqref{eq3.23}, one sees that
    \begin{equation}
C_{ij} P_k-C_k P_iP_j= \frac{1}{2} \bigl( A_- P_k A_+ P_i P_j-  A_+ P_k A_- P_i P_j \bigr) \, .
    \end{equation}
    Now
\begin{equation}
\begin{split}
   \frac{1}{2} A_- P_k A_+ P_i P_j &+ \text{ cyclic permutations}= 
   \frac{1}{4} \bigl(  A_- P_k A_+ P_i P_j+  A_- P_i A_+ P_j P_k \bigr)\\
  & + \frac{1}{4} \bigl(  A_- P_i A_+ P_j P_k+  A_- P_j A_+ P_i P_k \bigr) + \frac{1}{4} \bigl(  A_- P_j A_+ P_i P_k+  A_- P_k A_+ P_i P_j \bigr) \, .
\end{split}
\end{equation}
With the help of \eqref{eq2.14} and of \eqref{eq3.23} again, one finds
\begin{equation}
  \begin{split}
     \frac{1}{2} A_- P_k A_+ P_i P_j &+ \text{ cyclic permutations}= \\
& \qquad    \frac{3}{4} A_- A_+ P- \left( \frac{1}{4}  A_- P_k A_+ P_i P_j
  + \text{ cyclic permutations}  \right) \, ,
\end{split}
\end{equation}
from where one obtains that
\begin{equation}
 A_- P_k A_+ P_i P_j + \text{ cyclic permutations} = A_- A_+ P \, .
\end{equation}  
Similarly one gets
\begin{equation}
 A_+ P_k A_- P_i P_j + \text{ cyclic permutations} = A_+ A_- P \,.
\end{equation}  
One thus sees that
\begin{equation}
\Bigl( C_{ij} P_k -C_k P_{ij} +  \text{ cyclic permutations}\Bigr) -\frac{1}{2}P_i P_j P_k = \frac{1}{2} \bigl(  A_- A_+- A_+ A_- -1\bigr)P = \Gamma 
\end{equation}  
which completes the proof.
  \end{proof}
We are now ready to state our main result that gives the commutation relations of the elements $C_i$ and $C_{ij}$, $i,j=1,2,3$ and to offer indications on how it is obtained.
\begin{theorem} \label{theo3.7}
  For $i,j,k \in \{ 1,2,3\}$ and all distinct, the elements $C_i$ and $C_{ij}$ satisfy the following relations
  \begin{equation} \label{eq3.33}
  \{C_{ij}, C_{jk}  \} = C_{ik} +  \{C_{j}, C_{ijk}  \} + \{C_{i}, C_{k}  \} 
  \end{equation}
  and
  \begin{equation} \label{eq3.34}
  [C_{ij}, C_{k}  ] +  [C_{jk}, C_{i} ] + [C_{ik}, C_{j}]=0 \, .
  \end{equation}
\end{theorem}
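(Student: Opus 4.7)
The plan is to establish both identities by direct expansion in the algebra generated by $A_\pm$, $\A_0$, and the involutions $P_i$, using the alternative expressions \eqref{eq3.2a}--\eqref{eq3.2b} and \eqref{eq3.8a}--\eqref{eq3.8b} for the centralizing elements together with Lemma~2.1 and the specialization \eqref{eq3.24} of Lemma~3.5.

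For the anticommutator identity \eqref{eq3.33}, I would expand each of $C_{ij}C_{jk}$ and $C_{jk}C_{ij}$ using either \eqref{eq3.2a} or \eqref{eq3.2b} for each factor. Every quartic in the odd generators that arises reduces to a cubic or a quadratic via $\{A_+,A_-\}=2\A_0$ and \eqref{eq2.5}; the resulting even factors $\A_0$ and $B_\pm=A_\pm^2$, each of which commutes with the $P_i$, can be pushed freely to either end of the monomial. After adding the two products and applying \eqref{eq2.14} to every residual sandwich $P_rA_\pm P_s$ and \eqref{eq2.15} to every residual $P_rA_\pm A_\mp A_\pm P_s$, the terms can be regrouped according to the involution string they carry ($P_iP_k$, $P_j$, $P_iP_jP_k$, or $P_i$ and $P_k$ separately). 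Using \eqref{eq3.26} to trade $C_{ijk}$ for a combination of the lower-index elements and the involutions, these families assemble into the three summands $C_{ik}$, $\{C_j,C_{ijk}\}$, and $\{C_i,C_k\}$ on the right-hand side of \eqref{eq3.33}.

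For the commutator identity \eqref{eq3.34}, I would expand each $[C_{ij},C_k]$ in the same fashion and then sum over the cyclic permutations of $(i,j,k)$. The purely involutive contribution to the cyclic sum collapses, up to a sign, to $\sum_{\mathrm{cyc}}[P_iP_j,C_k]$, which vanishes by \eqref{eq3.24}. The contribution bilinear in the $A_\pm$-dependent parts of $C_{ij}$ and $C_k$ cancels in cyclic pairs after repeated use of \eqref{eq2.14} and the commutation relations $[P_i,\A_0]=[P_i,B_\pm]=0$, which allow the $A_\pm$ factors to be marched past the involutions in a symmetric way.

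The main obstacle is the volume of bookkeeping required for \eqref{eq3.33}: the expansion generates dozens of monomials in $A_\pm$ and the $P_i$, and only a disciplined choice of which form \eqref{eq3.2a} or \eqref{eq3.2b} to use for each factor, together with systematic application of Lemma~2.1 at every sandwiched involution, keeps the cancellations transparent and lets the $\{C_j,C_{ijk}\}$ contribution surface cleanly rather than hide inside a cascade of spurious remainders. This bookkeeping is presumably why the derivation is relegated to the Appendix.
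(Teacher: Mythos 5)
Your proposal follows essentially the same route as the paper's Appendix: expand all the generators in terms of $A_\pm$ and the involutions, sort the resulting monomials by degree (constant, bilinear, quartic), kill the cross terms of \eqref{eq3.34} with \eqref{eq3.24}, and close the quartic sector with \eqref{eq2.14} and \eqref{eq2.15}. One minor correction: since $A_+$ and $A_-$ always occur in pairs no cubic terms can arise, and the quartic monomials with interleaved involutions do not in general reduce to lower degree --- the paper instead matches them between the two sides within each ordering class ($A_+A_-^2A_+$, $A_-A_+^2A_-$, $(A_\pm A_\mp)^2$), which is in effect what your application of \eqref{eq2.15} to the residual sandwiches accomplishes.
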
  
Although a bit involved, the proofs proceed straightforwardly.  The basic strategy is to separately expand and simplify the expressions on the left and right hand sides to show that they coincide.  From the definitions of $C_i$, $C_{ij}$
and $C_{ijk}$, we see that \eqref{eq3.33} and \eqref{eq3.34} will amount to equalities between expressions that involve terms that are quartic, quadratic and of degree 0 in the $\osp(1,2)$ generators $A_-$ and $A_+$ which always occur in pairs (so as to commute with $A_0$).  Expand those expressions in full. It is practical to focus on each of these categories of terms and to verify that they are identically equal on their own upon comparing left- and right-hand sides.  Furthermore, when dealing with the quartic component, it is also possible to concentrate in turn on terms of definite type with respect to the ordering, e.g. terms having the structure $A_+ A_-^2 A_+$, $A_- A_+^2 A_-$ or $A_+ A_- A_+ A_-$
and $A_- A_+ A_- A_+$ together, for instance.  In order to not overly clutter the flow of the presentation, we shall relegate further discussion of the proofs to the Appendix, where examples of the computations that are needed will be given.

\begin{remark} \label{rem3.8}  In view of Corollary~\ref{cor3.4}, we have in addition
  \begin{equation}
[C_i, C_{ijk}]=[C_{ij}, C_{ijk}]=0\, , \quad i,j,k \text{ distinct.}
  \end{equation}  
  It follows that the term $\{ C_j, C_{ijk}\}$ in \eqref{eq3.33} can also be written as $2\, C_j C_{ijk}$.
  The relations \eqref{eq3.33} and \eqref{eq3.34} thus bear a kinship with those of the Bannai-Ito algebra.  In a situation where the $C_i$ are constants, \eqref{eq3.34} trivializes and one recovers the relations \eqref{eq1.1}
with $\omega_{ij}$ elements of the center of the algebra generated by $C_{ij}$. We shall discuss in the next section an instance where this happens.
\end{remark} 

\section{Embeddings of $\osp(1,2)$ into $\osp(1,2)^{\otimes 3}$} \label{sec4}

A situation in which three involutions naturally arise is when one considers the coproduct embedding of $\osp(1,2)$ into its three-fold tensor product 
$\osp(1,2)^{\otimes 3}$.  We shall now indicate how this fits within the framework that has been developed and how previously known results appear. Let us recall the definition of the $\osp(1,2)$ coproduct:  it is a co-associative homomorphism
\begin{equation}
\Delta: \osp(1,2) \to \osp(1,2) \otimes \osp(1,2)
\end{equation}  
such that
\begin{equation}
  \begin{split}
    \Delta(A_{\pm}) & =A_{\pm} \otimes P + 1 \otimes A_{\pm} \\
    \Delta(\A_{0}) & = \A_{0} \otimes 1 + 1 \otimes \A_{0} \\
    \Delta (P) & = P \otimes P \, .
    \end{split}
\end{equation} 
Upon letting
\begin{equation}
\Delta^{(2)} = (\Delta \otimes 1) \circ \Delta
\end{equation}  
one has a homomorphism $\Delta^{(2)}: \osp(1,2) \to \osp(1,2)^{\otimes 3}$.
Introduce the three involutions
\begin{equation}
 P^{(1)}= P \otimes 1 \otimes 1,\qquad
P^{(2)}= 1 \otimes P \otimes 1,\qquad P^{(3)}= 1 \otimes 1 \otimes P \, .
\end{equation}
We have
\begin{equation} \label{eq4.5}
  \begin{split}
    \mathcal \A_0 &  = \Delta^{(2)}(\A_0)= \A_0^{(1)} + \A_0^{(2)}+\A_0^{(3)} \\
    \mathcal A_{\pm} &  = \Delta^{(2)}(A_{\pm})= A_{\pm}^{(1)} P^{(2)}P^{(3)}
    +  A_{\pm}^{(2)} P^{(3)} +  A_{\pm}^{(3)}  \\
  \mathcal P & = \Delta^{(2)}(P)=   P^{(1)} P^{(2)}P^{(3)}   
  \end{split}  
\end{equation}  
where the suffix designates to which factor the element belongs.
Since $\Delta^{(2)}$ is a homomorphism, we have a version of $\osp(1,2)$
generated by $ \mathcal \A_0$, $ \mathcal A_{\pm}$ and $\mathcal P$
with three involutions $P^{(i)}$, $i=1,2,3$, that satisfy the requirements of our set-up, namely
\begin{equation}
P^{(1)}  P^{(2)}  P^{(3)} = \mathcal P
\end{equation}
and
\begin{equation} \label{eq4.7}
[P^{(i)},\mathcal A_0]=0,\qquad \Bigl[ P^{(i)}, [P^{(j)}, \mathcal A_{\pm}]\Bigr]=0,\qquad i,j=1,2,3, \quad  i\neq j\, .
\end{equation}
The second relation of \eqref{eq4.7} is verified since
\begin{equation} \label{eq4.8}
\begin{split}  
  [\mathcal A_{\pm}, P^{(1)}] & = 2 A_{\pm}^{(1)} P^{(1)}  P^{(2)}  P^{(3)} \\
  [\mathcal A_{\pm}, P^{(2)}] & = 2 A_{\pm}^{(2)}  P^{(2)}  P^{(3)} \\
  [\mathcal A_{\pm}, P^{(3)}] & = 2 A_{\pm}^{(3)} P^{(3)} \, .
\end{split}
\end{equation}
Let us then examine what our construction entails in this case.  Consider first the 1-index elements $C_i$ and take $i=1$ for instance
\begin{equation}
    C_1  = \frac{1}{4} \Bigl\{  \mathcal A_-,  [\mathcal A_+, P^{(1)}]\Bigr\}- \frac{1}{2} P^{(1)} 
    = \frac{1}{2} [\mathcal A_-,A^{(1)}_+] P - \frac{1}{2} P^{(1)} = \frac{1}{2} \Bigl( [A_-^{(1)},A^{(1)}_+] - 1 \Bigr)P^{(1)}= \Gamma^{(1)}
\end{equation}  
where $\Gamma^{(1)}$ stands for the Casimir operator of the first factor in
$\osp(1,2)^{\otimes 3}$.  Similarly, we find that in fact
\begin{equation}
C_i = \Gamma^{(i)},\qquad i=1,2,3\, .
\end{equation}
The various $C_i$ manifestly all commute.
In this case, one refers to the simple embeddings of $\osp(1,2)$ into $\osp(1,2)^{\otimes 3}$ as one of the factors.  In addition to those and to $\Delta^{(2)}$ there are other embeddings labelled by two indices for which the generators are:
\begin{align} \label{eq4.11}
 & \mathcal A_0^{(ij)} = \A_0^{(i)}+\A_0^{(j)}, \qquad  i,j=1,2,3   \nonumber \\
  \mathcal A_{\pm}^{(12)} = A_{\pm}^{(1)} P^{(2)}+A_{\pm}^{(2)}, \quad
 &  \mathcal A_{\pm}^{(23)} = A_{\pm}^{(2)} P^{(3)}+A_{\pm}^{(3)}, \quad
  \mathcal A_{\pm}^{(13)} = A_{\pm}^{(1)}P^{(2)} P^{(3)}+A_{\pm}^{(3)} \nonumber \\
   \mathcal P^{(12)}= P^{(1)} P^{(2)}, & \qquad  
    \mathcal P^{(23)}= P^{(2)} P^{(3)}, \qquad
    \mathcal P^{(13)}= P^{(1)} P^{(3)} \, .
\end{align}  
To these embeddings correspond the Casimirs
\begin{equation}
\Gamma^{(ij)}=\frac{1}{2} \Bigl( [\mathcal A_{-}^{(ij)}, \mathcal A_{+}^{(ij)}]-1 \Bigr) \mathcal P^{(ij)} \, .
\end{equation}
It is immediate to observe that the Casimir operators $C_i = \Gamma^{(i)}$ of each of the factors commute with all the generators in \eqref{eq4.11}
and hence with the intermediate Casimirs, namely
\begin{equation}
[\Gamma^{(i)}, \Gamma^{(jk)}]=0\,, \quad \forall \, i,j,k \, \, (j\neq k)\, .
\end{equation}  
Let us now come to the 2-index elements and consider for instance
\begin{equation}
C_{13}= \frac{1}{4} \Bigl \{ \mathcal A_-, [\mathcal A_+, P^{(1)} P^{(3)}] \Bigr \}-\frac{1}{2}\mathcal P^{(13)}\, .
\end{equation}  
With the help of \eqref{eq4.8}, we see that
\begin{equation}
  \begin{split}
    C_{13}& =  \frac{1}{4} \Bigl \{ \mathcal A_-, [\mathcal A_+, P^{(1)} ]P^{(3)}
    + P^{(1)} [\mathcal A_+,  P^{(3)}] \Bigr \}-\frac{1}{2}\mathcal P^{(13)} \\
    & =  \frac{1}{2} \Bigl \{ \mathcal A_-, A_+^{(1)} P^{(1)} P^{(2)}
    +  A_+^{(3)} P^{(1)} P^{(3)}  \Bigr \}-\frac{1}{2}\mathcal P^{(13)} \\
     & =  \frac{1}{2} \Bigl \{ \mathcal A_-,  \mathcal A_+^{(13)} P^{(1)} P^{(3)}  \Bigr \}-\frac{1}{2}\mathcal P^{(13)}\, .
  \end{split}  
\end{equation}  
Now $\mathcal A_-=\mathcal A_-^{(13)}+ A_-^{(2)} P^{(3)}$ and
\begin{equation}
  \{ A_-^{(2)} P^{(3)},  \mathcal A_+^{(13)} P^{(1)} P^{(3)} \}
  =  \{ A_-^{(2)} P^{(3)},  A_+^{(1)} P^{(1)} P^{(2)} +  A_+^{(3)} P^{(1)} P^{(3)} \}=0\, .
\end{equation}  
Hence
\begin{equation}
C_{13}=\frac{1}{2} \Bigl( [\mathcal A_{-}^{(13)}, \mathcal A_{+}^{(13)}]-1 \Bigr) \mathcal P^{(13)}= \Gamma^{(13)}
\end{equation}  
and we observe that $C_{13}$ is the intermediate Casimir element associated to the embedding $(ij)=(13)$ of \eqref{eq4.11}.  In general, we find that
\begin{equation}
C_{ij}= \Gamma^{(ij)} \, .
\end{equation}  
As always,
\begin{equation}
C_{ijk}= \frac{1}{2}  \Bigl( [\mathcal A_{-}, \mathcal A_{+}]-1 \Bigr) \mathcal P 
\end{equation}  
is the Casimir element of the main $\osp(1,2)$ of the construction, here
$\Delta^{(2)}\bigl( \osp(1,2) \bigr)$.

The centralizer of $\Delta^{(2)}\bigl( \osp(1,2) \bigr)$
is therefore the algebra generated by the various Casimir operators: $C_i=\Gamma^{(i)}$,  $C_{ij}=\Gamma^{(ij)}$ and $C_{ijk}=\Gamma$.
Equation \eqref{eq3.34} is hence trivially satisfied and $\{ C_j, C_{ijk} \}=2 \Gamma^{(j)} \Gamma$ while  
$\{ C_i, C_{k} \}=2 \Gamma^{(i)} \Gamma^{(k)}$.  It thus follows that the relations of Theorem~\ref{theo3.7} reduce to those of the Bannai-Ito algebra with $\omega_{ij}$ central in this embedding situation.  Furthermore, if we were to consider products of three irreducible representations, the Casimirs $\Gamma^{(i)}$ would be constants.  This confirms that the defining relations \eqref{eq3.33}-\eqref{eq3.34} generalize those of the Bannai-Ito algebra whose appearance in the Racah problem for $\osp(1,2)$ is here seen as a special case in our framework.

\section{An hyperoctahedral extension of the Bannai-Ito algebra}
\label{sec5}

We shall obtain in this section a generalization of the Bannai-Ito algebra that involves the signed symmetric group in three objects.  This will be achieved by considering a generalization of $\osp(1,2)$ built from Dunkl operators \cite{cit12}.  Let $x_i$, $i=1,2,\dots,n$,
be the variables; we shall keep their number arbitrary for the moment.  We wish to realize the involutions $P_i$ by the reflections $R_i$ in the plane $x_i=0$, i.e.
\begin{equation}
R_i f(\dots,x_i,\dots)=  f(\dots,-x_i,\dots) \, .
\end{equation}  
It is hence natural to consider the Dunkl operators associated with the Weyl group of type $B_n$ which contains the reflections $R_i$ and the permutations $\pi_{ij}$:
\begin{equation}
\pi_{ij} f(\dots, x_i,\dots,x_j,\dots)=  f(\dots, x_j,\dots,x_i,\dots)\, .
\end{equation}  
These $B_n$-Dunkl operators depend on two parameters $a$ and $b$, and are defined by \cite{cit13,cit14} 
\begin{equation} \label{eq5.3}
D_i = \frac{\partial}{\partial x_i} + \frac{b}{x_i} (1-R_i) + a \sum_{j \neq i} \left( \frac{1}{x_i-x_j}(1-\pi_{ij})+  \frac{1}{x_i+x_j}(1-R_i R_j \pi_{ij})  \right) \, .
\end{equation}
Remarkably, they form a commuting set:
\begin{equation} \label{eq5.4}
[D_i,D_j]=0 \qquad \forall \, i,j \, .
\end{equation}  
Their commutation relations with the coordinates are found to be
\begin{equation} \label{eq5.5}
[D_i,x_j]= \delta_{ij} \left( 1+a \sum_{k \neq i} (1+R_i R_k) \pi_{ik} + 2b R_i  \right) -(1-\delta_{ij}) a (1-R_i R_j) \pi_{ij} \, .
\end{equation}  
A special feature of the Dunkl operators of type $B_n$ is their behavior under
reflections.  Indeed, using $R_i \pi_{ij}= \pi_{ij} R_j$, it is easy to see that
\begin{equation}\label{eq5.6}
\{R_i, D_i\}=0, \qquad [R_i,D_j]=0\quad i\neq j \, .
\end{equation}
These relations are also valid of course when $D_i$ is replaced by the coordinate $x_i$.
Property \eqref{eq5.6} is key.  Let us now restrict to the situation with three variables $x_1$, $x_2$, $x_3$ and the corresponding $B_3$-Dunkl operators $D_1$, $D_2$, $D_3$.
Take
\begin{equation} \label{eq5.7}
  \begin{split}
\hat A_-& = D_1 R_2 R_3 + D_2 R_3 + D_3 \\
\hat A_+& = x_1 R_2 R_3 + x_2 R_3 + x_3 \, .
  \end{split}  
\end{equation}  
In view of the reflection properties \eqref{eq5.6} and given \eqref{eq5.4},
one sees that
\begin{equation}
  \begin{split}
\hat B_-& = A_-^2= D_1^2 + D_2^2 + D_3^2 \\
\hat B_+& = A_+^2 =x_1^2 + x_2^2 + x_3^2 \, .
  \end{split}  
\end{equation}  
It is known \cite{cit15,cit16} that such operators $\hat B_{\pm}$ provide a realization of the $su(1,1)$ commutation relations \eqref{eq2.7} with the Euler operator
\begin{equation} \label{eq5.9}
  \hat A_0= x_1 \frac{\partial}{\partial x_1} + x_2 \frac{\partial}{\partial x_2}+
  x_3 \frac{\partial}{\partial x_3} + 3 \left( 2a+b + \frac{1}{2} \right)
\end{equation}  
playing the role of $\A_0$.  It is in fact not too difficult 
to see moreover that $\hat A_{\pm}$ close onto the $\osp(1,2)$ relations:
\begin{equation}
\{ \hat A_-, \hat A_+ \}= 2 \hat A_0, \qquad [\hat A_0, \hat A_{\pm}]=\pm \hat A_{\pm} \, .
\end{equation}  
Let us remark that the form of $\hat A_-$ and $\hat A_+$ in \eqref{eq5.7}
is reminiscent of the expressions we had for $\mathcal A_-$ and $\mathcal A_+$
in Section~\ref{sec4} as a result of the coproduct action.  The key difference though is that we no longer have the analogy of $[A_-^{(i)},A_+^{(j)}]=0$, $i \neq j$, as is seen from \eqref{eq5.5}. 

The reader might also think that the Dunkl operators
\begin{equation}
\nabla_i = \frac{\partial}{\partial x_i} + \frac{\mu_i}{x_i} (1-R_i),\quad i=1,2,3 
\end{equation}  
associated to the reflection group $\mathbb Z_2^3$ and with $\mu_1$, $\mu_2$, $\mu_3$ constants, could prove pertinent to introduce the reflections $R_i$ as involutions.  This is true and has in fact been considered already \cite{cit7,cit8}; note however that
\begin{equation}
[\nabla_i,x_j]=0 \quad i\neq j
\end{equation}  
in this case which leads to a realization of the embedding situation of Section~\ref{sec4}.  The uniform specialization $\mu_1=\mu_2=\mu_3=b$ arises with the $B_3$-operators when $a=0$.

Before going further we should confirm that the requirements on the involutions are satisfied.  This is so because $P$ is realized as $R=R_1 R_2 R_3$; indeed
\begin{equation}
[R, \hat A_0]=0,\qquad \{ R, \hat A_{\pm}\}= 0 \, .
\end{equation}  
Moreover, owing to \eqref{eq5.7} we see that
\begin{equation}
\Bigl[ R_i, [R_j,  \hat A_{\pm}] \Bigr]=0 \qquad i\neq j \, .
\end{equation}  
We can thus apply our formalism to get the explicit expressions of the centralizing elements in this concrete realization of $\osp(1,2)$ and to determine as well the defining relations of the particular generalized Bannai-Ito algebra that emerges.

Let
\begin{equation}
S_{ij}= [D_i, x_j] \, .
\end{equation}

\noindent We see from \eqref{eq5.5} that $S_{ij}=S_{ji}$, namely that
$[D_i, x_j]=[D_j, x_i]$.  It is useful to record for reference the
explicit expressions for $S_{ij}$ for $i\leq j$, $i,j=1,2,3$:
\begin{equation} \label{eq5.16}
  \begin{split}
    S_{11}& = 1+ a(1+ R_1 R_2)\pi_{12} + a(1+ R_1 R_3)\pi_{13}+2b R_1 \\
    S_{22}& = 1+ a(1+ R_1 R_2)\pi_{12} + a(1+ R_2 R_3)\pi_{23}+2b R_2 \\
    S_{33}& = 1+ a(1+ R_1 R_3)\pi_{13} + a(1+ R_2 R_3)\pi_{23}+2b R_3 \\
    S_{12}& = -a(1-R_1 R_2) \pi_{12} \\
    S_{13}& = -a(1-R_1 R_3) \pi_{13} \\
     S_{23}& = -a(1-R_2 R_3) \pi_{23} \, .
  \end{split}  
\end{equation}  
Let
\begin{equation}
M_{ij}= x_i D_j -x_j D_i , \qquad i,j=1,2,3 \, .
\end{equation}  
The commutation relation of these Dunkl angular momentum operators have been given in \cite{cit17} and read
\begin{equation}
  \begin{split}
    [M_{ij}, M_{k\ell}]& = M_{i \ell} S_{jk} + M_{jk} S_{i\ell}-M_{i k} S_{\ell j}-M_{j \ell} S_{ik}\\
    & = S_{jk} M_{i \ell}  + S_{i \ell} M_{jk} -S_{\ell j} M_{i k} -S_{ik} M_{j \ell} \, .
    \end{split}
\end{equation}  
Using $C_{ij}= \bigl\{\hat A_-,[\hat A_+, R_i R_j] \bigr\}/4 -R_i R_j/2$, it is straightforward to arrive at the following result.
\begin{proposition}
  The 2-index elements that commute with $\hat A_-$ and $\hat A_+$ are
  \begin{equation} \label{eq5.19}
    \begin{split} 
      C_{12}& = M_{12} R_1 + \frac{1}{2} (S_{11}+ S_{22}-1)R_1 R_2 -\frac{1}{2}(S_{13}+S_{23} R_1) \\
      C_{23}& = M_{23} R_2 + \frac{1}{2} (S_{22}+ S_{33}-1)R_2 R_3 -\frac{1}{2}(S_{12}R_3+S_{13}) \\
       C_{13}& = M_{13} R_1R_2 + \frac{1}{2} (S_{11}+ S_{33}-1)R_1 R_3 -\frac{1}{2}(S_{12}R_3+S_{23} R_1) \, .
    \end{split}    
  \end{equation}
\end{proposition}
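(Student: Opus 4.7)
The plan is to verify the stated formulas by direct substitution of the Dunkl realization \eqref{eq5.7} into the defining formula
\begin{equation*}
C_{ij} = \tfrac{1}{4} \bigl\{ \hat A_-, [\hat A_+, R_i R_j] \bigr\} - \tfrac{1}{2} R_i R_j \, .
\end{equation*}
The commutation of $C_{ij}$ with $\hat A_\pm$ and $\hat A_0$ is already guaranteed by Proposition~\ref{prop3.1}, whose structural hypotheses were verified just before the statement via \eqref{eq5.6}. The only remaining task is computational, and I would carry it out in detail for a single representative case, $C_{12}$; the derivations of $C_{23}$ and $C_{13}$ are entirely analogous with the appropriate relabelling, taking account of the asymmetric appearance of $x_1, x_2, x_3$ in $\hat A_\pm$.

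For $C_{12}$, the first step is to evaluate the inner commutator. Using $\{R_k, x_k\} = 0$ and $[R_k, x_\ell] = 0$ ($k \neq \ell$), a brief calculation yields $[\hat A_+, R_1 R_2] = 2(x_1 R_1 + x_2 R_1 R_2) R_3$. The second step is to distribute the anticommutator with $\hat A_- = D_1 R_2 R_3 + D_2 R_3 + D_3$ into six pieces of the form $\{D_k R_{\cdots}, x_\ell R_{\cdots}\}$, and in each piece to push all reflections to the right via \eqref{eq5.6} together with the analogous anticommutation rules for the coordinates $x_\ell$. Each of these six pieces collapses to either a commutator $[D_k, x_\ell] = S_{k\ell}$ or an anticommutator $\{D_k, x_\ell\}$, multiplied by a single product of reflections on the right. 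The two diagonal pieces produce $S_{11} R_1 R_2$ and $S_{22} R_1 R_2$, which combined with the $-\tfrac{1}{2} R_1 R_2$ subtraction give $\tfrac{1}{2}(S_{11} + S_{22} - 1) R_1 R_2$. The cross pair involving $(D_1, x_2)$ and $(D_2, x_1)$ combines to $2 M_{12} R_1$ via the elementary identity $\{D_2, x_1\} - \{D_1, x_2\} = 2(x_1 D_2 - x_2 D_1)$, in which the $S_{12}$ terms cancel. The last two pieces, involving $D_3$, evaluate to $S_{13} R_1 R_3$ and $S_{23} R_1 R_2 R_3$; the explicit forms in \eqref{eq5.16}, combined with the commutation of $\pi_{k\ell}$ with $R_k R_\ell$ and the identity $(1 - R_k R_\ell) R_k R_\ell = -(1 - R_k R_\ell)$, simplify these to $-S_{13}$ and $-S_{23} R_1$ respectively. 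Collecting everything reproduces the announced expression for $C_{12}$.

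The main obstacle is bookkeeping rather than anything conceptual: the expansion involves many products of reflections, Dunkl operators, and coordinates that must be normal-ordered in a uniform convention, and the conjugation identities for $S_{k\ell}$ used in the last step depend on the non-trivial action of $\pi_{k\ell}$ on $R_m$. Once a consistent ordering (all reflections on the right) is fixed and these identities are deployed systematically, the desired expressions fall out; the compactness of the final answer reflects the fact that many intermediate reflection products collapse to at most one $R_k$ or $R_k R_\ell$ on the right.
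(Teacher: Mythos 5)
Your proposal is correct and follows essentially the same route as the paper, which simply states that the formulas are obtained by direct substitution of the realization \eqref{eq5.7} into $C_{ij}=\frac{1}{4}\{\hat A_-,[\hat A_+,R_iR_j]\}-\frac{1}{2}R_iR_j$, using the reflection properties \eqref{eq5.6}, the symmetry $S_{ij}=S_{ji}$, and the identity \eqref{eq5.20} to reach the final form. Your worked-out case $C_{12}$ reproduces the computation the paper leaves implicit, and the intermediate values you report (the inner commutator, the diagonal terms $S_{11}R_1R_2+S_{22}R_1R_2$, the cross pair giving $2M_{12}R_1$, and the $D_3$ pieces $S_{13}R_1R_3$ and $S_{23}R_1R_2R_3$) all check out.
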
  
We note that in obtaining the final form of the formulas \eqref{eq5.19} one uses
\begin{equation} \label{eq5.20}
S_{ij}R_i R_j = -S_{ij}\, \qquad i \neq j\, .
\end{equation}

Introduce now the quantities $Q_{ij}=Q_{ji}$ in the group algebra of $B_3$ that are defined as follows:
\begin{equation}
  \begin{split}
    Q_{12}& = \frac{1}{2}(1+R_1+R_2-R_1R_2) \pi_{12} \\
    Q_{13}& = \frac{1}{2}(R_1+R_2+R_3-R_1R_2R_3) \pi_{13} \\
     Q_{23}& = \frac{1}{2}(1+R_2+R_3-R_2R_3) \pi_{23} \, .
   \end{split}  
\end{equation}  
One first observes that these $Q_{ij}$ are involutions:
\begin{equation} \label{eq5.22}
Q_{ij}^2=1 \, .
\end{equation}  
Furthermore, one finds that the algebra they form is isomorphic to the algebra of the permutation operators
$\pi_{12}$, $\pi_{13}$ and $\pi_{23}$ of the symmetric group $S_3$.  Indeed, one checks that 
\begin{equation} \label{eq5.23}
Q_{12} Q_{13}= Q_{23}Q_{12}= Q_{13}Q_{23}, \qquad Q_{12} Q_{23}= Q_{23}Q_{13}= Q_{13}Q_{12}
  \end{equation}  
together with $Q_{ij} R_j=R_i Q_{ij}$.  One discovers also that
\begin{equation}
[Q_{ij}, C_{ij}]=0
\end{equation}  
and that
\begin{equation} \label{eq5.25}
  \begin{split}
    Q_{12} C_{13} & = C_{23}Q_{12},\qquad  Q_{12} C_{23}= C_{13}Q_{12}, \qquad  Q_{13} C_{12}= C_{23}Q_{13}  \\
     Q_{13} C_{23} & = C_{12}Q_{13},\qquad  Q_{23} C_{12}= C_{13}Q_{23}, \qquad  Q_{23} C_{13}= C_{12}Q_{23}  \, .
  \end{split}  
\end{equation}  
Consider now the 1-index elements  $C_{i}= \bigl\{\hat A_-,[\hat A_+, R_i] \bigr\}/4 -R_i /2$.  One readily finds
\begin{equation} \label{eq5.26}
  \begin{split}
    C_1 & = \frac{1}{2} (S_{11}R_1+S_{12}R_1 R_2+S_{13}R_1 R_2 R_3 -R_1) \\
    C_2 & = \frac{1}{2} (-S_{12}+S_{22}R_2-S_{23}-R_2) \\
     C_3 & = \frac{1}{2} (-S_{13}R_2-S_{23}+S_{33}R_3 -R_3) \, .
  \end{split}  
\end{equation}  
These expressions can be rewritten in terms of the involutions $Q_{ij}$, $i,j=1,2,3$.  The results read as follows.
\begin{proposition}
  The centralizing 1-index elements $C_i$, $i=1,2,3$, are given by
  \begin{equation} \label{eq5.27}
    C_i = a(Q_{ij}+ Q_{ik}) +b
  \end{equation}
  with $i,j,k \in \{1,2,3 \}$ and all distinct.
\end{proposition}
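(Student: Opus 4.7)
The plan is to verify the identity by direct computation, starting from the explicit expressions (5.26) and substituting the formulas (5.16) for $S_{ij}$. I would carry out the calculation in detail for $C_1$ and then invoke the analogous (structurally identical) manipulations for $C_2$ and $C_3$.

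First, I would use the $B_n$-braid-type relations that follow from $R_i\pi_{ij} = \pi_{ij}R_j$, namely
\begin{equation*}
\pi_{ij} R_i = R_j \pi_{ij}, \qquad \pi_{ij}\text{ commutes with }R_k \text{ for }k\notin\{i,j\},
\end{equation*}
combined with $R_i^2 = 1$ and $[R_i,R_j]=0$, to reduce every product of the form $(1\pm R_iR_j)\pi_{ij}R_\ell$ to a linear combination of terms $R^{\epsilon_1}_1R^{\epsilon_2}_2R^{\epsilon_3}_3\pi_{ij}$. For the off-diagonal pieces $S_{ij}R_iR_j$ with $i\neq j$, I would invoke equation (5.20) directly, so that for example $S_{12}R_1R_2 = -S_{12} = a(1-R_1R_2)\pi_{12}$, and $S_{13}R_1R_2R_3 = -S_{13}R_2 = a(1-R_1R_3)R_2\pi_{13}$ (using that $R_2$ commutes with $\pi_{13}$).

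Next, I would collect the resulting expression for $2C_1$ by permutation. The diagonal contribution $S_{11}R_1$ contributes $R_1$, the constant $2b$ (coming from $2bR_1\cdot R_1$), and two pieces of the form $a(R_1+R_2)\pi_{12}$ and $a(R_1+R_3)\pi_{13}$. The lone $-R_1$ in (5.26) cancels the $R_1$ from $S_{11}R_1$. Combining everything, the coefficient of $\pi_{12}$ becomes
\begin{equation*}
(R_1+R_2)+(1-R_1R_2) = 1+R_1+R_2-R_1R_2 = 2Q_{12}/\pi_{12}\cdot\pi_{12},
\end{equation*}
which is precisely $2Q_{12}$ by the definition of $Q_{12}$; and the coefficient of $\pi_{13}$ becomes
\begin{equation*}
(R_1+R_3)+(R_2-R_1R_2R_3) = R_1+R_2+R_3-R_1R_2R_3 = 2Q_{13}/\pi_{13}\cdot\pi_{13},
\end{equation*}
giving $2Q_{13}$. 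Hence $2C_1 = 2b + 2a Q_{12} + 2aQ_{13}$, which is the claimed formula for $i=1$.

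The cases $C_2$ and $C_3$ proceed identically: one substitutes (5.16), pushes the rightmost $R_i$ through using the braid identities and (5.20), cancels the $R_i$ and $-R_i$, extracts $2b$ from the $2bR_i\cdot R_i$ contribution in $S_{ii}R_i$, and recognizes the coefficients of the remaining $\pi_{ij}$ as $2Q_{ij}$. There is no conceptual obstacle—only bookkeeping—so the main thing to watch is tracking signs carefully when commuting reflections past the permutation operators and in the identity $S_{ij}R_iR_j = -S_{ij}$, and making sure the signs in (5.26) (which differ among $C_1$, $C_2$, $C_3$) are applied consistently. Once the dust settles, the constant term is always $b$ and the two $\pi_{ij}$ contributions align exactly with the two involutions $Q_{ij}$ and $Q_{ik}$ whose indices involve $i$.
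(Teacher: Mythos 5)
Your computation is correct and is exactly the direct verification the paper intends (the paper simply asserts that \eqref{eq5.26} ``can be rewritten'' in terms of the $Q_{ij}$, leaving the substitution of \eqref{eq5.16} and the use of $\pi_{ij}R_i=R_j\pi_{ij}$ together with \eqref{eq5.20} implicit). The coefficients you obtain, $a(1+R_1+R_2-R_1R_2)\pi_{12}+a(R_1+R_2+R_3-R_1R_2R_3)\pi_{13}+2b$ for $2C_1$ and the analogous expressions for $2C_2$, $2C_3$, match the definitions of $2Q_{ij}$ exactly, so the argument is complete.
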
  
\begin{remark}
  Since
  \begin{equation}
    C_i + C_j -C_k = 2aQ_{ij} +b
  \end{equation}
  it follows that the quantities $Q_{ij}$ commute also with $\hat A_{\pm}$.  We thus see that these involutions enlarge the
  set of symmetries formed by the operators $C_{ij}$.
\end{remark}  
\begin{remark}
  One can now verify that 
  \begin{equation} \label{eq5.29}
     C_{ij}= M_{ij} R_i R_{j-1} +C_i R_j+C_j R_i +\frac{1}{2} R_i R_j 
  \end{equation}
  where $R_i R_{j-1}=R_i$ if $i=j-1$. 
\end{remark}  
\begin{proposition}
  In the realization \eqref{eq5.7}, \eqref{eq5.9} of $\osp(1,2)$ with the $B_3$-Dunkl operators, the Casimir element
  $\Gamma= \frac{1}{2}\bigl( [\hat A_-, \hat A_+]-1\bigr)R$, $R=R_1 R_2R_3$, is given by 
  \begin{equation} \label{eq5.30}
   \Gamma= M_{12} R_1 R_3 +  M_{13} R_1 +  M_{23} R_1 R_2 + \frac{1}{2} ( S_{11}+S_{22}+S_{33}-1)R\, .
  \end{equation}  
\end{proposition}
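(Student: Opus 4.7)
The plan is to evaluate $[\hat A_-, \hat A_+]$ directly by expanding the three-term forms \eqref{eq5.7}, and then to multiply by $\tfrac{1}{2}(\,\cdot\,-1)R$ with $R = R_1 R_2 R_3$. Setting $u_-^{(1)} = D_1 R_2 R_3$, $u_-^{(2)} = D_2 R_3$, $u_-^{(3)} = D_3$, and $u_+^{(i)}$ defined analogously with $x_i$ in place of $D_i$, one has
\[
[\hat A_-, \hat A_+] = \sum_{i,j=1}^{3} [u_-^{(i)}, u_+^{(j)}],
\]
and throughout the expansion the only primitives needed are $\{R_i, D_i\} = \{R_i, x_i\} = 0$, $[R_i, D_j] = [R_i, x_j] = 0$ for $i \neq j$, $R_i^2 = 1$, together with the definition $S_{ij} = [D_i, x_j]$ and the symmetry $S_{ij} = S_{ji}$.

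The three diagonal commutators are immediate: in $[u_-^{(i)}, u_+^{(i)}]$ the flanking reflection string $R_j R_k$ with $\{j,k\} = \{1,2,3\} \setminus \{i\}$ commutes past $D_i$ and $x_i$ and squares to $1$, so one is left with $[D_i, x_i] = S_{ii}$. Their sum contributes the scalar $S_{11} + S_{22} + S_{33}$.

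For the six off-diagonal terms I would group them into the three unordered pairs $\{(i,j),(j,i)\}$. In each pair, after moving reflections past the $D$'s and $x$'s (anticommuting with the matching index, commuting otherwise), one obtains expressions of the form $-\{D_i, x_j\} R_{\star}$ and $+\{D_j, x_i\} R_{\star}$ with the same surviving reflection $R_{\star}$ in both; using $D_i x_j = x_j D_i + S_{ij}$ together with $S_{ij} = S_{ji}$, the two scalar $S_{ij}$ pieces appear with opposite signs and cancel, leaving $2(x_i D_j - x_j D_i) = 2 M_{ij}$ times $R_{\star}$. A careful accounting gives $2 M_{12} R_2$ from the pair $(1,2)$, $2 M_{23} R_3$ from $(2,3)$, and $2 M_{13} R_2 R_3$ from $(1,3)$, whence
\[
[\hat A_-, \hat A_+] = S_{11} + S_{22} + S_{33} + 2 M_{12} R_2 + 2 M_{23} R_3 + 2 M_{13} R_2 R_3.
\]

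Finally, multiplying $\tfrac{1}{2}([\hat A_-, \hat A_+] - 1)$ on the right by $R = R_1 R_2 R_3$ and using the elementary simplifications $R_2 R = R_1 R_3$, $R_3 R = R_1 R_2$, $R_2 R_3 R = R_1$ (immediate from pairwise commutativity of the $R_i$ and $R_i^2 = 1$) produces exactly \eqref{eq5.30}. The one genuinely delicate step is the sign and reflection bookkeeping in the off-diagonal pairs: it is what forces the $S_{ij}$ scalars to cancel rather than double and yields the correct $M_{ij}$ with the correct trailing reflections; once this is in place, everything else reduces to the kind of $R_i^2 = 1$ simplification that has already been used freely in \eqref{eq5.19}.
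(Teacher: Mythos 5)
Your proposal is correct and follows the same route as the paper, whose proof is simply the statement ``This is obtained by a direct computation''; you have carried out that computation explicitly, and the intermediate results ($[\hat A_-,\hat A_+]=S_{11}+S_{22}+S_{33}+2M_{12}R_2+2M_{23}R_3+2M_{13}R_2R_3$, the cancellation of the off-diagonal $S_{ij}$ via $S_{ij}=S_{ji}$, and the final right-multiplication by $R$) all check out.
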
  
\begin{proof} This is obtained by a direct computation. 
\end{proof}  
\begin{remark} Using \eqref{eq5.19} and \eqref{eq5.26} and invoking property \eqref{eq5.20} again, 
  one can see that \eqref{eq5.30} matches with formula \eqref{eq3.26}.  Given \eqref{eq3.26}, if one simplifies
  the part $C_i R_j R_k + C_j R_i R_k +C_k R_i R_j= \frac{1}{2} (S_{11}+S_{22}+S_{33}-3)R$, one may also write $\Gamma$ in the form
  \begin{equation} \label{eq5.31}
    \begin{split}
      \Gamma & = C_{12}R_3+C_{13}R_2+C_{23}R_1 \\
      & \quad - \left( a \Bigl(  (1+R_1 R_2)\pi_{12}+ (1+R_1 R_3)\pi_{13}+ (1+R_2 R_3)\pi_{23} \Bigr) + b(R_1+R_2+R_3) +\frac{1}{2} \right) R\, .
\end{split}
    \end{equation}  
\end{remark}
The Jucys-Murphy elements \cite{cit18,cit19} of $B_n$ have been given in \cite{cit20}.  For $B_3$ they are
\begin{equation}
    R_1, \quad R_2, \quad R_3, \quad m_2= (1+R_1R_2)\pi_{12} , \quad m_3= (1+R_1R_3)\pi_{13} + (1+R_2R_3)\pi_{23}\, .
\end{equation}  
It can be checked that all these elements commute with each other.  The symmetric polynomials in these entities are known to generate
the center of the group algebra of $B_3$.  It is thus not surprising the above Jucys-Murphy elements appear in the expression \eqref{eq5.31}
of the central Casimir operator $\Gamma$.

We are now ready to determine the specific form that the commutation relations \eqref{eq3.33} and \eqref{eq3.34} take.
\begin{proposition} Let $i,j,k \in \{ 1,2,3\}$ be distinct.  The defining relation \eqref{eq3.33} of the algebra generated by
  the operators $C_i$, $C_{ij}$ and $C_{ijk}$ respectively given by \eqref{eq5.27}, \eqref{eq5.29} and \eqref{eq5.31} for instance,
  takes the form
  \begin{equation} \label{eq5.33}
  \{ C_{ij}, C_{jk}\}= C_{ik}+ 2 \Gamma \Bigl( a(Q_{ij}+Q_{jk})+b \Bigr) +a^2\bigl( 3 \{ Q_{ij}, Q_{jk}\} +2 \bigr) +2ab (Q_{ij}+Q_{jk}+2Q_{ik}) + 2b^2\, .
  \end{equation}  
\end{proposition}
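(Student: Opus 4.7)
The plan is to deduce formula \eqref{eq5.33} directly from the abstract identity \eqref{eq3.33} of Theorem~\ref{theo3.7} by substituting in the explicit Dunkl-realization expressions and then simplifying using the $S_3$-like algebra generated by the $Q_{ij}$'s. No fresh computation inside the universal enveloping algebra of $\osp(1,2)$ is needed; everything is algebraic manipulation in the group algebra of $B_3$ multiplied by the operators $\Gamma, C_{ik}$.

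First, I would invoke Theorem~\ref{theo3.7} to write
\begin{equation*}
\{C_{ij},C_{jk}\}=C_{ik}+\{C_j,C_{ijk}\}+\{C_i,C_k\}.
\end{equation*}
In our realization $C_{ijk}=\Gamma$ and Remark~\ref{rem3.8} tells us that $\Gamma$ commutes with $C_j$, so
\begin{equation*}
\{C_j,C_{ijk}\}=2\,C_j\,\Gamma=2\Gamma\bigl(a(Q_{ij}+Q_{jk})+b\bigr),
\end{equation*}
using \eqref{eq5.27} with the convention $Q_{ji}=Q_{ij}$. This already accounts for the term $2\Gamma(a(Q_{ij}+Q_{jk})+b)$ appearing on the right-hand side of \eqref{eq5.33}.

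Next I would compute the anticommutator $\{C_i,C_k\}$ by plugging in $C_i=a(Q_{ij}+Q_{ik})+b$ and $C_k=a(Q_{ik}+Q_{jk})+b$, expanding, and grouping by degree in $a$ and $b$. The degree-zero piece gives $2b^2$; the degree-one piece collects into $2ab(Q_{ij}+2Q_{ik}+Q_{jk})$, matching the target; the degree-two piece is
\begin{equation*}
a^2\bigl(\{Q_{ij},Q_{ik}\}+\{Q_{ij},Q_{jk}\}+\{Q_{ik},Q_{jk}\}+2Q_{ik}^2\bigr).
\end{equation*}
Here I would use $Q_{ik}^2=1$ from \eqref{eq5.22} together with the braid-type identities \eqref{eq5.23}. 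A short inspection of \eqref{eq5.23} shows that the two products $Q_{ij}Q_{ik}$ and $Q_{ij}Q_{jk}$ represent the two inverse three-cycles of the $S_3$ algebra on the generators, and therefore
\begin{equation*}
\{Q_{ij},Q_{ik}\}=\{Q_{ij},Q_{jk}\}=\{Q_{ik},Q_{jk}\},
\end{equation*}
since each of the three sums equals $c+c^{-1}$ for the same pair of three-cycles $c,c^{-1}$. Substituting this equality collapses the degree-two piece to $a^2\bigl(3\{Q_{ij},Q_{jk}\}+2\bigr)$.

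Collecting the three contributions yields precisely the right-hand side of \eqref{eq5.33}. The only nontrivial step is the combinatorial lemma that the three pairwise anticommutators among $Q_{ij}, Q_{ik}, Q_{jk}$ all coincide; this is the main obstacle, but it is handled cleanly by the identifications in \eqref{eq5.23}. Everything else is bookkeeping.
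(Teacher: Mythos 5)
Your proposal is correct and follows essentially the same route as the paper: substitute the realization into the abstract relation \eqref{eq3.33}, use that $\Gamma$ commutes with $C_j$ to get the $2\Gamma\bigl(a(Q_{ij}+Q_{jk})+b\bigr)$ term, and reduce the quadratic piece via the identity $\{Q_{ij}+Q_{ik},Q_{ik}+Q_{jk}\}=3\{Q_{ij},Q_{jk}\}+2$, which is exactly the relation the paper singles out as the one nontrivial step (your $S_3$ three-cycle argument is a valid derivation of it from \eqref{eq5.23}). The only cosmetic difference is that you justify $\{C_j,\Gamma\}=2\Gamma C_j$ by Remark~\ref{rem3.8} while the paper verifies it directly from \eqref{eq5.31}; both are fine.
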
  
\begin{proof} This simply follows from \eqref{eq3.33} and the expression of the 1-index elements $C_i$. Given \eqref{eq5.31}, one may directly verify  that
  $\{ \Gamma,C_j \}=2 \Gamma C_j$.  One shall note that
  \begin{equation}
\{Q_{ij}+ Q_{ik} , Q_{ik}+Q_{jk} \}= 3 \{ Q_{ij}, Q_{jk} \} +2
  \end{equation}  
  a relation easily derived from \eqref{eq5.23}.
\end{proof}  
\begin{remark} The relation \eqref{eq3.34} is readily seen to be identically satisfied in the present realization.  Indeed, we have
  \begin{equation}
 [C_{12},C_3]+ [C_{23},C_1]+ [C_{13},C_2]=  [C_{12},Q_{13}+Q_{23}] +  [C_{23},Q_{12}+Q_{13}]+  [C_{13},Q_{12}+Q_{23}]
  \end{equation}
  and one shows that these terms add up to zero with the help of relations \eqref{eq5.25}.
\end{remark}  
\begin{remark} We observe that when $a=0$, the relations \eqref{eq5.33} correspond to the relations \eqref{eq1.1} of the Bannai-Ito
  algebra with central structure constants given by $\omega_{12}=\omega_{13}=\omega_{23}=2 \Gamma b + 2b^2$. When $a\neq 0$, we have a remarkable generalization that involves the generators $\pi_{ij}$ and $R_i$, $i,j \in \{ 1,2,3\}$, of the signed permutation group on three objects.
 This hyperoctahedral extension of the Bannai-Ito algebra has $C_{ij}$ and $Q_{ij}$ as generators with relations given by \eqref{eq5.22}-\eqref{eq5.25} and \eqref{eq5.33} where $C_{ijk}$ is central. 
\end{remark}  
   
\begin{proposition} The generalized Bannai-Ito algebra defined by \eqref{eq5.33} admits a Casimir operator $C$ such that
  \begin{equation}
   [C,C_{ij}]= [C,C_i]=0
  \end{equation}  
  which is given by
  \begin{equation}
   C= C_{12}^2 +  C_{13}^2 +  C_{23}^2-a^2 Q^2 - 4ab Q
  \end{equation}  
  with
  \begin{equation}
    Q= Q_{12} +  Q_{13} +  Q_{23} \, .
  \end{equation}  
  \end{proposition}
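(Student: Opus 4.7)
The plan is to verify that $C$ commutes with each generator of the algebra. Since $\Gamma$ is central by Corollary~\ref{cor3.4} and $C_i=a(Q_{ij}+Q_{ik})+b$ by \eqref{eq5.27}, it suffices to check $[C,Q_{ij}]=0$ and $[C,C_{ij}]=0$ for each $ij\in\{12,13,23\}$; the first family automatically gives $[C,C_i]=0$.

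The relations \eqref{eq5.23} and \eqref{eq5.25} assert that conjugation by any $Q_{ij}$ realizes a transposition in $S_3$ acting simultaneously on the triples $(C_{12},C_{13},C_{23})$ and $(Q_{12},Q_{13},Q_{23})$, fixing $C_{ij}$ and $Q_{ij}$ and swapping the other two of each. Since $C_{12}^2+C_{13}^2+C_{23}^2$, $Q$, and $Q^2$ are symmetric in their respective triples, $C$ is invariant under each such conjugation, i.e., $[C,Q_{ij}]=0$. Moreover, once $[C,C_{12}]=0$ is established, $[C,C_{13}]=[C,C_{23}]=0$ follow by conjugating with $Q_{23}$ and $Q_{13}$ respectively; so it remains only to verify $[C,C_{12}]=0$.

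For this, I would use $[C_{12}^2,C_{12}]=0$ together with the identity $[A^2,B]=[A,\{A,B\}]$ to rewrite
\begin{equation*}
[C_{13}^2+C_{23}^2,C_{12}] \;=\; [C_{13},\{C_{13},C_{12}\}] \,+\, [C_{23},\{C_{23},C_{12}\}],
\end{equation*}
and then substitute the two relevant instances of \eqref{eq5.33}. The linear $C_{ik}$ pieces contribute $[C_{13},C_{23}]+[C_{23},C_{13}]=0$; the $2b^2$ and $2\Gamma b$ constants drop out since $\Gamma$ is central; and the $\Gamma$-dependent $Q$-pieces collapse to $2a\Gamma\bigl([C_{13},Q_{12}]+[C_{23},Q_{12}]\bigr)$, which vanishes because \eqref{eq5.25} yields $[C_{13},Q_{12}]=(C_{13}-C_{23})Q_{12}=-[C_{23},Q_{12}]$.

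The remaining residual, to be matched against $[a^2Q^2+4abQ,C_{12}]$, is then purely $a^2$ and $ab$. The $ab$ part reduces, via \eqref{eq5.25}, to $4ab\bigl((C_{13}-C_{12})Q_{23}+(C_{23}-C_{12})Q_{13}\bigr)$, which coincides termwise with $4ab[Q,C_{12}]$. For the $a^2$ part, \eqref{eq5.23} forces the three pairwise anticommutators $\{Q_{ij},Q_{kl}\}$ of distinct pairs to coincide, so $Q^2=3+3T$ with $T=\{Q_{12},Q_{13}\}$, and the identity to prove collapses to $[C_{12}+C_{13}+C_{23},T]=0$. This last statement is essentially free: $C_{12}+C_{13}+C_{23}$ commutes with every $Q_{ij}$ by the same $S_3$-argument used above, and hence with $T$, which lies in the subalgebra generated by the $Q_{ij}$'s. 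The main obstacle throughout is the bookkeeping of the many correction terms in \eqref{eq5.33}; once the $\Gamma$-, constant-, and linear-$C$ contributions are seen to drop and the $a^2$-residual is recast as the symmetry identity above, no difficult computation remains.
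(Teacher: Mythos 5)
Your proposal is correct and follows essentially the same route as the paper: both reduce the problem to checking $[C,C_{12}]=0$ and $[C,C_1]=0$, use $Q^2=3+3\{Q_{12},Q_{13}\}$ together with the equality of the three anticommutators $\{Q_{ij},Q_{kl}\}$, and invoke \eqref{eq5.25} to dispose of the $[C,C_i]$ part. The only difference is one of packaging: where the paper asserts $[C_{12}^2+C_{13}^2+C_{23}^2,C_{12}]=3a^2\bigl[\{Q_{12},Q_{23}\},C_{12}\bigr]+4ab[Q,C_{12}]$ as ``a direct calculation'', you derive that same identity explicitly from the structure relation \eqref{eq5.33} via $[A^2,B]=[A,\{A,B\}]$, and you phrase the case reductions as $S_3$-conjugation invariance under the $Q_{ij}$ --- a tidier presentation, but not a substantively different argument.
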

\begin{proof} A direct calculation shows on the one hand that
  \begin{equation}
   [ C_{12}^2 +  C_{13}^2 +  C_{23}^2, C_{12}]= 3a^2 \bigl[ \{ Q_{12},Q_{23}\}, C_{12}   \bigr] + 4ab [Q,C_{12}]\, .
  \end{equation}  
  It is seen on the other hand that
  \begin{equation}
\{ Q_{12}, Q_{13} \}=\{ Q_{13}, Q_{23} \}=\{ Q_{12}, Q_{23} \}
  \end{equation}  
  and therefore that
  \begin{equation}
   Q^2= 3 +3 \{ Q_{12}, Q_{13} \} \, .
  \end{equation}  
  This confirms that $C$ commutes with $C_{12}$.  The same goes for  $C_{13}$ and  $C_{23}$. Consider now the commutator $[C,C_1]$.
  Note that
    \begin{equation}
      \begin{split}
        [ C_{12}^2 +  C_{13}^2 +  C_{23}^2, C_{1}]& = a  [ C_{12}^2 +  C_{13}^2 +  C_{23}^2, Q_{12}+Q_{13}] \\
        & =  a [ C_{13}^2 +  C_{23}^2, Q_{12}]+ a [ C_{12}^2  +  C_{23}^2, Q_{13}]
        \end{split}
    \end{equation}  
    and check that each commutator in the last line vanishes because of \eqref{eq5.25}.  Observe moreover that
    \begin{equation}
[Q,C_1] = a[Q_{13},Q_{12}+Q_{23}]=0 \, .
    \end{equation}  
    It follows that $[C,C_1]=0$.  One shows similarly that  $[C,C_2]=[C,C_3]=0$.
\end{proof}  
\begin{remark}
  We may expect the invariant $C$ to be related to the Casimir operator of $\osp(1,2)$ and indeed one finds that
  \begin{equation}
C= \Gamma^2+ 3(a^2+b^2)-\frac{1}{4}\,.
  \end{equation}  
\end{remark}

\section{$\osp(1,2)$ realizations with Clifford algebras} \label{sec6}
It is known that $\osp(1,2)$ can be realized using Dunkl operators and generators of Clifford algebras~\cite{cit16}.
The centralizer of the superalgebra is then identified with the symmetries of the corresponding Dirac-Dunkl (massless) equation.
We now wish to indicate how the formalism developed in this paper applies in the Clifford algebra context and to give the expressions
of the operators that then form the generalized Bannai-Ito algebra.

Consider a set of $n$ generators $e_i$, $i=1,\dots,n$, of a Clifford algebra which verify
\begin{equation}
\{e_i,e_j \}=2 \delta_{ij} \, .
\end{equation}  
These $e_i$ are taken to commute with operators acting on functions of the coordinates $x_i$, $i=1,\dots,n$,
and in particular with permutations operators.  Let $\mathcal D_i$ be Dunkl operators associated to some arbitrary reflection group and take
\begin{equation} \label{eq6.2}
\tilde A_-= \sum_{i=1}^n \mathcal D_i e_i
\end{equation}  
and
\begin{equation} \label{eq6.3}
\tilde A_+= \sum_{i=1}^n  x_i e_i \, .
\end{equation}  
Given that $[\mathcal D_i,\mathcal D_j]=0$, it is manifest that
\begin{equation}
\tilde B_- = \tilde A_-^2= \sum_{i=1}^n \mathcal D_i^2
\end{equation}  
and
\begin{equation}
\tilde B_+ = \tilde A_+^2= \sum_{i=1}^n x_i^2 \, .
\end{equation}  
$\tilde B_-^2$ and $\tilde A_-$ are respectively the Laplace-Dunkl and Dirac-Dunkl operators.  It can be checked \cite{cit16} that the operators
$\tilde A_{\pm}$ realize the $\osp(1,2)$ commutation relations $\{\tilde A_-,\tilde A_+  \}= 2 \tilde A_0$,  $[\tilde A_0,\tilde A_{\pm}  ]
= \pm \tilde A_{\pm}$, when $\tilde A_0$ is taken to be the Euler operator
\begin{equation} \label{eq6.6}
\tilde A_0 = \frac{1}{2} \sum_{i=1}^n \{ x_i,\mathcal D_i\} \, .
\end{equation}
Provided $\{R_i,\mathcal D_i\}=0$, $[R_i, \mathcal D_j ]=0$,
    $i\neq j$, the grade involution can again be realized by $R=\prod_{i=1}^n R_i$.
As already observed, this last equation is enforced for the Dunkl operators associated to the reflection groups $\mathbb Z_2^n$ and $B_n$.
In the case of $\mathbb Z_2^n$, albeit in a model with Clifford generators, the coproduct embedding of $\osp(1,2)$ into
$\osp(1,2)^{\otimes 3}$ is realized and the centralizer will in general be the higher rank Bannai-Ito algebra \cite{cit10}.
We shall therefore focus anew on the $B_n$-Dunkl operators $D_i$ given in \eqref{eq5.3} and take $n=3$.  Replacing $\mathcal D_i$ by $D_i$
in \eqref{eq6.2}-\eqref{eq6.3}, we now observe that 
\begin{equation}
  [\tilde A_-,R_i] = 2 D_i e_i R_i, \qquad [\tilde A_+,R_i] = 2 x_i e_i R_i  \, .
\end{equation}  
The requirement $\Bigl[ R_i, [R_j , \tilde A_{\pm}] \Bigr]=0$, $i\neq j$, is thus satisfied.  We can therefore apply our formalism here also to compute the generators of the corresponding centralizer in this model of $\osp(1,2)$.
\begin{proposition} \label{prop6.1}
  Given the realization \eqref{eq6.2}, \eqref{eq6.3} and \eqref{eq6.6}, for $n=3$ the elements $C_i$, $C_{ij}$ and the Casimir operator
  $C_{ijk}=\Gamma$ that satisfy the relations \eqref{eq3.33} and \eqref{eq3.34} of the generalized Bannai-Ito algebra are given by
  \begin{equation} \label{eq6.8}
C_i = \frac{1}{2} \bigl(S_{ii}-S_{ij} e_i e_j -S_{ik}e_i e_k -1\bigr)R_i
  \end{equation}  
 \begin{equation} \label{eq6.9}
   C_{ij} = -M_{ij} e_i e_jR_i R_j+
   \frac{1}{2} \bigl(S_{ii}+S_{jj}-S_{ik} e_i e_k -S_{jk}e_j e_k -1\bigr)R_i R_j
 \end{equation}
 and
 \begin{equation}
   \Gamma = \left(  -M_{12} e_1 e_2  -M_{13} e_1 e_3  -M_{23} e_2 e_3
   +  \frac{1}{2} \bigl(S_{11}+S_{22}+S_{33}-1  \bigr) \right) R
 \end{equation}
 with $R=R_1 R_2 R_3$ and $i,j,k \in \{ 1,2,3 \}$ all distinct.
 \end{proposition}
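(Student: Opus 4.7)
The strategy is direct substitution of the Clifford realization \eqref{eq6.2}--\eqref{eq6.6} into the definitions of $C_S$ from Proposition~\ref{prop3.1}, followed by systematic simplification using the relations already at hand: the Dunkl rule $[D_i,x_j]=S_{ij}$ from \eqref{eq5.5}, the Clifford relation $\{e_i,e_j\}=2\delta_{ij}$, the commutativity of each $e_k$ with all $x_\ell$, $D_\ell$, and $R_\ell$, and $\{R_i,D_i\}=\{R_i,x_i\}=0$ together with $[R_i,D_j]=[R_i,x_j]=0$ for $j\neq i$.

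The preliminary step is to compute $[\tilde A_+,R_i]=2x_ie_iR_i$ and $[\tilde A_-,R_i]=2D_ie_iR_i$, together with their two-index analogues $[\tilde A_+,R_iR_j]=2(x_ie_i+x_je_j)R_iR_j$ and $[\tilde A_-,R_iR_j]=2(D_ie_i+D_je_j)R_iR_j$, which follow because only the summand of $\tilde A_\pm$ with matching index fails to commute. Substituting the first identity into \eqref{eq3.2a} reduces $C_i$ to $\tfrac12\{\tilde A_-,x_ie_iR_i\}-\tfrac12 R_i$. Expanding the anticommutator as a double sum, pulling each $R_i$ through the $D_k$'s via $R_iD_k=(-1)^{\delta_{ik}}D_kR_i$, and commuting $D_k$ past $x_i$ through $D_kx_i=x_iD_k+S_{ki}$, I would show that the cross terms $x_iD_k(e_ke_i+e_ie_k)$ with $k\neq i$ drop out by Clifford anticommutation, the $k=i$ pieces telescope via $e_i^2=1$, and the remainder assembles into \eqref{eq6.8}.

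For $C_{ij}$ the same mechanism applies, but with two Clifford-coordinate pairs now feeding the anticommutator. The decisive new contribution comes from the cross terms $D_je_j\cdot x_ie_i+D_ie_i\cdot x_je_j$: after commuting the Dunkl operators past the coordinates and using $e_ie_j=-e_je_i$, these combine into $(x_jD_i-x_iD_j)e_ie_j=-M_{ij}e_ie_j$, which upon multiplication by $R_iR_j$ produces the angular-momentum term $-M_{ij}e_ie_jR_iR_j$ of \eqref{eq6.9}. The diagonal $k\in\{i,j\}$ pieces assemble the $(S_{ii}+S_{jj})$ contribution; the $k\notin\{i,j\}$ cross terms give $-S_{ik}e_ie_k-S_{jk}e_je_k$; and the $-\tfrac12 R_iR_j$ from the definition supplies the $-1$ inside the parenthesis.

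For $\Gamma=C_{123}$ the most efficient route is $\Gamma=\tfrac12([\tilde A_-,\tilde A_+]-1)R$, since the grade involution is realized by $R=R_1R_2R_3$. Splitting $[\tilde A_-,\tilde A_+]=\sum_{i,j}[D_ie_i,x_je_j]$ over pairs, the $i=j$ terms give $\sum_i S_{ii}$ (using $e_i^2=1$); for $i\neq j$, the Clifford combination $e_ie_j-e_je_i=2e_ie_j$ pairs with $x_jD_i-x_iD_j=-M_{ij}$ to yield $-2\sum_{i<j}M_{ij}e_ie_j$; and the remaining $\sum_{i\neq j}S_{ij}e_ie_j$ vanishes by the symmetry $S_{ij}=S_{ji}$ against the antisymmetry of $e_ie_j$. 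Assembling and right-multiplying by $R$ gives \eqref{eq6.10}. The main obstacle is bookkeeping rather than conceptual: the interplay of Clifford signs, the $R_i$-against-$D_k$ sign rule, and the threefold case split $k\in\{i\}$, $k\in\{j\}$, $k\notin\{i,j\}$ in the $C_{ij}$ computation all have to be tracked without error; as a sanity check one can also verify that the expressions \eqref{eq6.8}--\eqref{eq6.10} satisfy the identity of Proposition~\ref{prop3.6}.
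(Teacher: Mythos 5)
Your proposal is correct and follows exactly the paper's route: the paper's proof is a one-line statement that the formulas come from explicit evaluation of $C_S=\frac{1}{4}\{\tilde A_-,[\tilde A_+,R_S]\}-\frac{1}{2}R_S$ for $S=\{i\},\{i,j\},\{i,j,k\}$, and your plan simply spells out that computation (your shortcut of getting $\Gamma$ from $\frac{1}{2}([\tilde A_-,\tilde A_+]-1)R$ is equivalent, since the paper has already identified $C_{[n]}=\Gamma$). The sign bookkeeping you describe checks out: the $k\notin\{i,j\}$ cross terms leave $S_{ki}e_ke_i=-S_{ik}e_ie_k$, the $i\leftrightarrow j$ cross terms assemble into $-M_{ij}e_ie_j$, and the symmetric $S_{ij}$ against antisymmetric $e_ie_j$ cancellation in $\Gamma$ is exactly right.
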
  
\begin{proof}
  These formulas are obtained by an explicit evaluation of
  $$C_S=\frac{1}{4}\Bigl\{ \tilde A_-, [\tilde A_+, R_S]  \Bigr\}-\frac{1}{2}R_S$$
for $S=\{ i\},\{ i,j\},\{ i,j,k\}$.
\end{proof}  
\begin{remark}
  It is readily seen that the expression for $\Gamma=C_{123}$ is in keeping with \eqref{eq3.26} and that we have
  \begin{equation}
\Gamma = C_{12}R_3 + C_{13}R_2 + C_{23}R_1 - C_1 R_2 R_3 -  C_2 R_1 R_3  -C_3 R_1 R_2 -\frac{1}{2} R_1 R_2 R_3 \, .
  \end{equation}  
  \end{remark}  
\begin{remark} When $a=0$, the generators become
  \begin{equation}
      C_{ij} =     \Bigl( -M_{ij} e_i e_j  +b(R_i + R_j) +\frac{1}{2}  \Bigr) R_i R_j, \qquad
      C_i  = b, \qquad 
      D_i  = \frac{\partial}{\partial x_i} + \frac{b}{x_i}(1-R_i)\, .
  \end{equation}  
We then recover the symmetries of the Dirac-Dunkl equation found in \cite{cit9} with the parameters of the $\mathbb Z_2^3$ Dunkl operators all equal to $b$.
\end{remark}  
Let us record another expression for $C_i$.
\begin{proposition} The 1-index elements $C_i$ can be given as follows
  \begin{equation}
C_i = a (W_{ij}+W_{ik}) e_i R_i +b,\qquad i,j,k \in \{1,2,3 \} \text{ all distinct}
  \end{equation}  
  with 
     \begin{equation}
W_{ij}= \frac{1}{2} \Bigl( (e_i-e_j)\pi_{ij} + (e_i+e_j) R_i R_j \pi_{ij}  \Bigr)\, .
  \end{equation}
  \end{proposition}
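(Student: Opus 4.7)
The plan is to start from the already-established expression \eqref{eq6.8} for $C_i$ and reduce it to the claimed form by a direct substitution of the explicit formulas \eqref{eq5.16} for $S_{ii}$, $S_{ij}$ and $S_{ik}$, followed by repackaging of the Clifford combinations. Specifically, I would substitute those expressions to obtain
$$2C_i = \Bigl( a(1+R_iR_j)\pi_{ij} + a(1+R_iR_k)\pi_{ik} + 2bR_i + a(1-R_iR_j)\pi_{ij}e_ie_j + a(1-R_iR_k)\pi_{ik}e_ie_k \Bigr)R_i,$$
splitting $2C_i$ into a constant piece $2bR_i\cdot R_i = 2b$ plus two pieces $X_{ij}$ and $X_{ik}$ of the same shape indexed by the two companion indices.

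The key identity I would then establish is the following rewriting of $W_{ij}e_i$. Using $e_i^2=1$ together with $e_je_i = -e_ie_j$, a line of computation gives $(e_i-e_j)e_i = 1+e_ie_j$ and $(e_i+e_j)e_i = 1-e_ie_j$, hence
$$W_{ij}\,e_i \;=\; \tfrac{1}{2}\Bigl((1+e_ie_j)\pi_{ij} + (1-e_ie_j)R_iR_j\pi_{ij}\Bigr).$$
Since the $e_\ell$ commute with all operators acting on the coordinates (in particular with $\pi_{ij}$ and the $R_\ell$) and since $R_iR_j$ commutes with $\pi_{ij}$ by virtue of $R_i\pi_{ij}=\pi_{ij}R_j$, this collapses to
$$W_{ij}\,e_i \;=\; \tfrac{1}{2}\pi_{ij}\Bigl(1+R_iR_j + (1-R_iR_j)e_ie_j\Bigr).$$
Multiplying by $2a$ on the left and $R_i$ on the right matches $X_{ij}$ above exactly, and the identical argument for the $(i,k)$-piece matches $X_{ik}$. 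Collecting, $C_i = a(W_{ij}+W_{ik})e_iR_i + b$, which is the claim.

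The only real bookkeeping obstacle is keeping straight the sign/order conventions as the Clifford generators are commuted past themselves (anticommutation), past permutations (commutation), and as reflections and transpositions are swapped via $R_i\pi_{ij}=\pi_{ij}R_j$. No nontrivial identity beyond these and $e_i^2=1$ is invoked, so once the rewrite of $W_{ij}e_i$ is verified the matching is term by term, and the proof is complete.
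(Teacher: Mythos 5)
Your proposal is correct and follows the same route as the paper, whose proof is simply the one-line observation that the result follows from substituting the explicit $B_3$ expressions \eqref{eq5.16} for $S_{ii}$, $S_{ij}$, $S_{ik}$ into \eqref{eq6.8}; your computation of $W_{ij}e_i=\tfrac{1}{2}\pi_{ij}\bigl(1+R_iR_j+(1-R_iR_j)e_ie_j\bigr)$ and the term-by-term matching just make that substitution explicit.
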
  
  \begin{proof} This result follows from \eqref{eq6.8} and the definition \eqref{eq5.16}.  It is the analog of \eqref{eq5.27}. Notice that $W_{ij}$
    has mixed symmetry.
  \end{proof}  

  We now wish to discuss the relation that the results of this section have with the study of the symmetries of the Dirac-Dunkl equations carried in \cite{cit11}.  Let us stress that the grade involution $P$ and its decomposition into $P = \prod_{i=1}^n P_i$ are central in our
  abstract framework.  In contradistinction, working exclusively in the realm of Clifford algebras, the authors of \cite{cit11} have designed an alternative method to obtain symmetries of Dirac-Dunkl equations specifically.  Let us briefly review their approach which has the merit of applying to any reflection group.

  Take $[n]=\{1,\dots,n \}$ and $S=\{s_1,\dots,s_k  \} \subseteq [n]$.  In the notation of Section~\ref{sec3}, let $x_S= \sum_{i=1}^k x_{s_i} e_{s_i}$,
  $\mathcal D_S= \sum_{i=1}^k \mathcal D_{s_i} e_{s_i}$ and $e_S= \prod_{i=1}^k  e_{s_i}$. Script letters will be used in the following to identify quantities that pertain to the generic Dunkl operator $\mathcal D_i$.  As shown in \cite{cit11}, the quantities
  \begin{equation}
\mathcal O_S= \frac{1}{2} \Bigl( \mathcal D_{[n]} x_S e_S -e_S x_S  \mathcal D_{[n]} -e_S  \Bigr)
  \end{equation}  
  either commute or anticommute with $\mathcal D_{[n]}$ and  $x_{[n]}$ for all $S$, depending on the cardinality $|S|$ of $S$; namely,
  \begin{equation} \label{eq6.16}
    \mathcal D_{[n]} \mathcal O_S  = (-1)^{|S|}   \mathcal O_S \mathcal D_{[n]} \quad \text{and} \quad
     x_{[n]} \mathcal O_S = (-1)^{|S|}     \mathcal O_S \, x_{[n]}\, .
  \end{equation}  
The quantities $\mathcal O_S$ and the algebra they form are the objects of consideration in \cite{cit11}.  Our $C_S$ and their algebraic properties are a priori distinct.

Now, for $n=3$, $i,j,k \in \{ 1,2,3\}$ and unequal, a simple application of formula \eqref{eq6.16} gives
\begin{equation} \label{eq6.17}
  \mathcal O_{ij} = \mathcal M_{ij} + \frac{1}{2}\bigl(\mathcal S_{ii}+ \mathcal S_{jj}\bigr) e_i e_j-\frac{1}{2} \mathcal S_{ik}e_j e_k
  +\frac{1}{2} \mathcal S_{jk}e_i e_k -\frac{1}{2}e_i e_j 
\end{equation}  
with
\begin{equation}
 \mathcal M_{ij} = x_i \mathcal D_j- x_j \mathcal D_i \quad \text{and} \quad \mathcal S_{ij} = [x_i, \mathcal D_j] \, .
\end{equation}  
This operator $\mathcal O_{ij}$, according to \eqref{eq6.16}, will commute with $\tilde A_-=\mathcal D_{[n]}$ and  $\tilde A_+=x_{[n]}$.  As it happens, when the $\mathcal D_i$ are the $B_3$-Dunkl operators $D_i$, this would seemingly give centralizing elements that differ from those we have already found and given in Proposition~\ref{prop6.1}.  This is reconciled by remarking that there are additional centralizing elements or symmetries when
the Dunkl operators at hand transform like derivatives under the reflections, a case in point for the $B_n$-Dunkl operators.  When this is so, it is readily verified \cite{cit9} that the operators
\begin{equation}
Z_i= e_i R_i
\end{equation}  
will obey
\begin{equation}
\{ \tilde A_{\pm}, Z_i\}=0 \, .
\end{equation}  
Products $Z_i Z_j$ will hence be symmetries.  Focus now on quantities $\mathcal O_S$ associated to the $B_3$-Dunkl operators $D_i$.  In light
of the previous observation, upon comparing \eqref{eq6.9} and \eqref{eq6.17}, we find that
\begin{equation}
C_{ij} = \mathcal O_{ij} e_i e_j R_i R_j,
\end{equation}  
in other words, we see that $C_{ij}$ and $\mathcal O_{ij}$ differ by a factor which is itself a symmetry given that the $B_3$-Dunkl operators
satisfy \eqref{eq5.6}.

Similarly, we can show that
\begin{equation}
C_i= \mathcal O_i e_i R_i \, .
\end{equation}  
Therefore, by multiplying two quantities $\mathcal O_i$ and $e_i R_i$ that anticommute with $\tilde A_{\pm}$, we find the
centralizing $C_i$.  This explains the relation between the two approaches when our formalism is applied to $\osp(1,2)$
realizations with Clifford algebras.

\section{Conclusion}

This paper has introduced generalizations of the Bannai-Ito algebra that are of rank 1.  They are intimately connected to the Lie superalgebra $\osp(1,2)$. It has been shown that whenever the grade involution $P$ of $\osp(1,2)$ factors into a product of appropriate involutions, elements that centralize $\osp(1,2)$ appear.

When $P$ admits three such factors, $P=P_1 P_2 P_2$,
the centralizer thus formed generalizes the Bannai-Ito algebra.  Realizations of $\osp(1,2)$ in terms of Dunkl operators have been considered to obtain concrete models of generalized Bannai-Ito algebras.  This has produced in particular an hyperoctahedral extension of the Bannai-Ito algebra that exhibits an interesting structure.

This study brings up many questions.  Determining which generalized Bannai-Ito algebras other realizations of $\osp(1,2)$ would entail
naturally comes to mind.  What would the approach bring when applied to other superalgebras or more complicated structures is also among those questions.  The number of supplementary involutions has been restricted to three in order to obtain centralizers of rank 1. Higher rank cases will result upon lifting this constraint.  As a matter of fact, the hyperoctahedral generalization of arbitrary rank will be presented in a forthcoming publication \cite{cit21} from the perspective of the rational $B_n$-Calogero model for non-identical particles.  Developing the representation theory of this $B_n$-extended Bannai-Ito algebra would be instructive.  One expects relations with special functions in one and many variables extending the connection that the Bannai-Ito polynomials have with the eponym algebra.

We look forward to exploring these avenues in the near future.

\appendix

\section{}

We here give more details on how Theorem~\ref{theo3.7} is proven along the strategy described in Section~\ref{sec3} after the statement of the theorem.  Generally, as one proceeds with the proofs, it is appropriate to use the following expanded form 
\begin{equation} 
C_S = \frac{1}{4} \Bigl(  A_- A_+ P_S - A_-P_S A_+ + A_+ P_S A_-- P_S A_+ A_-  \Bigr)  -\frac{1}{2} P_S
\end{equation}  
for the centralizing elements.

\noindent Sketch of the proof of \eqref{eq3.33}.

Consider the l.h.s. of \eqref{eq3.33}.  In expanding $\{C_{ij}, C_{jk} \}$, the first thing to do is to eliminate the involutions $P_j$ with index $j$
by using formulas such as \eqref{eq3.23} to bring the two $P_j$ factors together and make them disappear given that $P_j^2=1$.  It is immediate to check that in \eqref{eq3.33}, the terms without $A_{\pm}$ involving only $P_i P_k$ match on both sides.

Let us now focus on the terms that are bilinear in $A_+$ and $A_-$.  In the anticommutators, these occur in the cross terms with the pure involution part of the relevant $C_S$ as a factor.  Start with the r.h.s. Using $\{ P,A_{\pm} \}=0$ and formula \eqref{eq3.23} repeatedly, one finds that the bilinear terms in $C_{ik}+\{C_{j}, C_{ijk} \} + \{C_{i}, C_{k} \}$ simplify to    
\begin{equation} 
  \begin{split}
    \text{bilinears on r.h.s. }& = \frac{3}{8} \bigl[P_i P_k, A_- A_+\bigr]-  \frac{1}{8} \bigl[P_i P_k, A_+ A_-\bigr] +
    \frac{1}{4} A_+ P_i P_k A_- \\
& \qquad     -  \frac{1}{4} A_- P_i P_k A_+ + \frac{1}{8} P_i [A_+,A_-] P_k   + \frac{1}{8} P_k [A_+,A_-] P_i \, .
  \end{split}  
\end{equation}  
This is seen to coincide with what is found after reducing with the same tools the sum of the bilinears in $\{C_{ij}, C_{jk} \}$, the l.h.s.
Remaining is the part that is quartic in the odd generators.  Let us describe for example how one deals with the terms of the form
$A_+ A_-^2 A_+$. Recall that $[P_i, A_-^2]=[P_j, A_-^2]= [P_k, A_-^2]=0$.  After the elimination of $P_j$, the terms featuring $A_-^2$ in 
$\{C_{ij}, C_{jk} \}$ are found to be
\begin{equation} \label{eqA.3}
  -\frac{1}{16} \Bigl( A_+ A_-^2 P_i A_+ P_k+  A_+ A_-^2 P_i P_k A_+ + P_i A_+ A_-^2  A_+ P_k +  P_i A_+ A_-^2 P_k A_+ + (i \leftrightarrow j)
  \Bigr)\, .
\end{equation}  
The terms in $A_-^2$ in $\{C_{i}, C_{k} \}$ are easily read off and found to coincide with those of $\{C_{ij}, C_{jk} \}$ given in
\eqref{eqA.3} except that the first and last have opposite signs.  This is nicely compensated by the corresponding terms in
$\{C_{j}, C_{ijk} \}$.  Indeed, collecting the terms in $A_-^2$ in
\begin{equation} 
  \begin{split}
    &  \{C_{j}, C_{ijk} \}  \\
    & \, \,  = \left\{ \frac{1}{4} \bigl(  A_- A_+ P_j -  A_- P_j A_+ +  A_+ P_j A_- -  P_j A_+ A_-   \bigr) - \frac{1}{2}P_j,
  \frac{1}{2} \bigl( A_- A_+ - A_+ A_- \bigr) P - \frac{1}{2} P
  \right\}
  \end{split}
\end{equation}  
yields
\begin{equation} 
  \begin{split}
     \frac{1}{8} \Bigl( - A_+ P_i P_k A_-^2 A_+  -   P_i P_k A_+ & A_-^2 A_+  -  A_+  A_-^2 A_+ P_i P_k- A_+ A_-^2 P_i P_k  A_+ 
    \Bigr) \\
& \qquad =  -\frac{1}{8} \Bigl(  A_+ A_-^2 P_i  A_+ P_k +  P_i A_+ P_k A_-^2 A_+ + (i \leftrightarrow j)   \Bigr)     \, .
  \end{split}  
\end{equation}  
It thus follows that the terms in $A_-^2$ in  $\{C_{j}, C_{ijk} \} + \{C_{i}, C_{k} \}$ are equal to those of  $\{C_{ij}, C_{jk} \}$.

One then follows the same approach with the other types in the quartic part to complete the proof of relation \eqref{eq3.33}.
Note that when dealing with the $(A_-A_+)^2$ and $(A_+A_-)^2$ terms jointly, the ones found in $\{C_{j}, C_{ijk} \}$ read
\begin{equation} 
\begin{split}
&   \frac{1}{8} \Bigl(  A_- A_+ P_i P_k A_- A_+  +  A_- A_+ A_- P_i P_k  A_+  +  A_- P_i P_k A_+  A_- A_+  +  A_- A_+ A_- A_+ P_i P_k    \Bigr) \\
& \quad  +   \frac{1}{8} \Bigl(  A_+ P_i P_k A_-  A_+ A_-  +  A_+ A_-  P_i P_k A_+  A_-  + P_i P_k A_+  A_-  A_+ A_-  +  A_+ A_- A_+  P_i P_k A_-    \Bigr) \, .
  \end{split}
\end{equation}  
The identity \eqref{eq2.15}, as well as \eqref{eq2.14}, are here needed to rewrite those terms in the form
\begin{equation} 
  \frac{1}{8} \Bigl(  A_- A_+ P_i  A_- P_k A_+  +  A_- P_i A_+ A_-   A_+ P_k   + A_+ P_i A_- P_k  A_+ A_-  +  P_i A_+ A_- A_+  P_k A_-  +
  (i \leftrightarrow k)   \Bigr)\, .
\end{equation}  
They are then seen to perfectly combine with the analogous terms in  $\{C_{i}, C_{k} \}$ to equal what is found in  $\{C_{ij}, C_{jk} \}$.

\noindent Sketch of the proof of \eqref{eq3.34}.

It is immediate to see that the terms involving only the involutions trivially vanish.  Recall that $C_{ij}= \frac{1}{4} \bigl \{A_-, [A_+, P_i P_j]  \bigr\} -\frac{1}{2} P_i P_j$.  From Lemma~\ref{lem3.5} and its specialization as formula \eqref{eq3.24} for $n=3$, we see that the bilinear part in $A_-$ and $A_+$ of \eqref{eq3.34} reduces to
\begin{equation} 
  -\frac{1}{8} \Bigl[ A_- A_+ P_i P_j  - A_-  P_i P_j A_+ + A_+  P_i P_j A_-  -   P_i P_j A_+ A_-, P_k  
    \Bigr] + \text{ cyclic permutations.} 
\end{equation}  
Expanding over the cyclic permutations and combining terms to use \eqref{eq2.14}, one arrives at
\begin{equation}
-\frac{1}{8} \Bigl[3  A_- A_+ P   + 3P  A_+  A_-  -  P_i P_j \{A_+,A_- \} P_k   -    P_j P_k \{A_+,A_- \} P_i - P_i P_k \{A_+,A_- \} P_j  
    \Bigr] 
\end{equation}  
which adds up to zero since the involutions commute with $\{A_+,A_- \}=2 \A_0$.

The quartic part is dealt with as in the proof of \eqref{eq3.33} by looking separately at the various orderings.  One regroups terms
arising from the different permutations and uses formula \eqref{eq2.14} to shift the involutions and observe that all terms cancel.

\begin{acknow} 
The authors wish to thank H. De Bie, R. Oste, J. Van der Jeugt  and W. van de Vijver for stimulating discussions.  VXG holds a postdoctoral fellowship from the Natural Science and Engineering Research Council (NSERC) of Canada.  The research of LL is supported by the Fondo Nacional de Desarrollo Cient\'{\i}fico y Tecnol\'ogico (FONDECYT) de Chile grant \#1170924.  LV gratefully acknowledges his support from NSERC through a discovery grant.  
\end{acknow}

\end{document}